\renewcommand{\epsilon}{\varepsilon}
\renewcommand{\phi}{\varphi}
\newcommand{\C}{\mathbb{C}}
\newcommand{\M}{\mathcal{M}}
\DeclareMathOperator{\Tr}{Tr}
\newcommand{\ketbra}[2]{|#1\rangle\langle#2|} 
\newtheorem{theorem}{Theorem}[section]
\newtheorem{definition}[theorem]{Definition}
\newtheorem{proposition}[theorem]{Proposition}
\newtheorem{corollary}[theorem]{Corollary}
\newtheorem{remark}[theorem]{Remark}
\DeclareMathAlphabet\mathbfcal{OMS}{cmsy}{b}{n}
\begin{document}

	\title{A physical noise model for quantum measurements}

\author{Faedi Loulidi}
\email{loulidi@irsamc.ups-tlse.fr}
\address{Laboratoire de Physique Th\'eorique, Universit\'e de Toulouse, CNRS, UPS, France}

\author{Ion Nechita}
\email{ion.nechita@univ-tlse3.fr}
\address{Laboratoire de Physique Th\'eorique, Universit\'e de Toulouse, CNRS, UPS, France}

\author{Clément Pellegrini}
\email{clement.pellegrini@math.univ-toulouse.fr}
\address{Institut de Mathématiques, Université de Toulouse, CNRS, UPS, France}

 \begin{abstract}
In this paper we introduce a novel noise model for quantum measurements motivated by an indirect measurement scheme with faulty preparation. Averaging over random dynamics governing the interaction between the quantum system and a probe, a natural, physical noise model emerges. We compare it to existing noise models (uniform and depolarizing) in the framework of incompatibility robustness. We observe that our model allows for larger compatibility regions for specific classes of measurements. 
 
\end{abstract}
	
	\date{\today}
	
	\maketitle
	
	\tableofcontents
	\section{Introduction}

In quantum information theory, measurements are modelled by POVMs (\emph{Positive Operator Valued Measure}). POVMs are tuples of positive semidefinite operators that sum up the identity. The POVM formalism allows one to describe the outcome probabilities of the measurement process without knowing the microscopic details of the measurement device. 
However, there exists another way of extracting the probability of obtaining some outcome without destroying the quantum system. This procedure is based on \emph{indirect measurement process}. For a given quantum state $\rho$ in a Hilbert space $\mathcal{H}_S\cong\C^d$, the indirect measurement process consists in performing a measurement of the probe on the evolved quantum state coupled to a general probe in a Hilbert space $\mathcal{H}_P\cong\C^n$. Initially, the quantum state is coupled to a probe, and the evolution of the total system is given by a unitary operator $U\in\mathcal{U}(d\,n)$. More precisely the evolution of the total system is given by: 
\begin{align*}
    U:\mathcal{H}_S\otimes\mathcal{H}_P&\to \mathcal{H}_S\otimes\mathcal{H}_P\\
    \rho\otimes\beta&\to U\big(\rho\otimes\beta\big)U^*.
\end{align*}
Now consider an observable $A$ on the probe system with spectral decomposition $A=\sum\lambda_i P_i$. In the indirect measurement process, one observes the eigenvalue $\lambda_i$ of A with probability
\begin{equation*}
    \mathbb P(i):=\Tr\Big[U(\rho\otimes \beta)U^*(I\otimes P_i)\Big].
\end{equation*}
Models of indirect measurement have been studied in the physical and mathematical community. Such models are at the cornerstone of the understanding of major experiments in quantum optics \cite{Haroche,guerlin_progressive_2007} see also \cite{carmichael,haroche2006exploring,wisemanmilburn}. For a mathematical approach where such model are precisely studied, one can consult \cite{benoist2018entropy,benoist2019invariant,PhysRevA.84.044103}. In what follows, we shall distinguish two different types of indirect measurement, a \emph{perfect measurement} and an \emph{imperfect measurement} which is interpreted as noisy measurement \cite{articleAA}.

More precisely, we say that we obtain a perfect measurement if the probe is \emph{well prepared} initially in a prescribed pure state. As an illustration, without lost of generality, consider the probe $\beta=\ketbra{0}{0}$. The probability of obtaining an outcome $\lambda_i$ is given by: 
\begin{equation*}
    \mathbb P(i)=\Tr\Big[U(\rho\otimes \ketbra{0}{0})U^*(I\otimes P_i)\Big].
\end{equation*}

An imperfect measurement can be understood physically as \emph{noise} that occurs in a quantum measurement device. To give a description of an imperfect measurement, we shall consider that,  instead of preparing the probe in the reference state $\ketbra{0}{0}$, preparation errors will result in the probe being in a state $\beta$ close to the desired state $\ketbra 0 0$. For example, $\beta$ could be a \emph{convex combination of two projectors} with a parameter $t\in[0,1]$: $\beta_t:=(1-t)\ketbra{0}{0}+t\ketbra{1}{1}$; here, $t$ plays the role of the noise parameter. In the general case, the probability of obtaining the outcome $\lambda_i$ is given by : 
\begin{equation*}
    \mathbb P(i)=\Tr\Big[U(\rho\otimes \beta)U^*(I\otimes P_i)\Big].
\end{equation*}
In both situations described above, the indirect measurement process induces an effective POVM on the system of interest $$\mathbb P(i)=\Tr [\rho\,A_i].$$
In this article, we shall describe \emph{noisy POVMs} as \emph{emergent} from a \emph{physical motivated model} based on an \emph{indirect measurement process} with an application to the \emph{compatibility of quantum measurements}.
The effect of noise has recently received attention for practical reasons, in noisy quantum computers or what is known as NISQ devices see \cite{Bharti_2022, lau2022nisq}, in quantum metrology \cite{len2022quantum}, and in the compatibility of quantum measurements \cite{designolle2019incompatibility}.

One should mention that in this work, we do not attempt a possible explanation of the measurement problem in quantum theory. However, our main focus to describe a noise model as an effective description that emerges naturally. Different noise models were introduced in the literature \cite{designolle2019incompatibility} by tacking the convex sum of the original measurement device with a trivial operator as we shall describe below. These noise models are justified by their mathematical elegance, being mixtures between the original measurement and simple, trivial measurements. In this paper, we try to go beyond the introduced noise models with a simple possible justification based on an indirect measurement process, which will allow us to model the noise effect and give a possible physical justification for the form of the noise model considered (see Fig.~\ref{fig:indirect measurement compatibility}). We will show the obtained noise model in our context is different from the one considered previously in the literature. 

Noisy POVM $\mathcal{A}^{\alpha}:=(A_1^{\alpha},\cdots,A_n^{\alpha})$ in the literature are described, by a convex combination of an initial POVM $\mathcal{A}=(A_1,\cdots,A_n)$ and a trivial POVM $\mathcal{T}:=(t_1 I,\cdots,t_n I)$ with some parameter $\alpha\in[0,1]$: 
\begin{equation}\label{eq:noisePOVM}
    A_i^{\alpha}:=\alpha\,A_i+(1-\alpha)t_i I,
\end{equation}
where we shall distinguish two main types of noise that were previously considered:
\begin{itemize}
    \item The \emph{uniform noise} where, for all $i\in[n]$ we have $t_i=1/n$. 
    \item The \emph{depolarizing noise} where, for all $i\in[n]$ we have $t_i=\Tr A_i/d$.
\end{itemize}

The different types of noise models above were introduced to study the effect of noise on the compatibility of quantum measurements. The notion of compatibility is highly explored in different contexts, such as the compatibility of quantum channels \cite{heinosaari2017incompatibility}, the compatibility dimension \cite{loulidi2021compatibility,uola2021quantum,heinosaari2021testing}, and the link between incompatibility and non-locality \cite{wolf2009measurements,loulidi2022measurement}.

In the following, we will give a brief overview of the different results in this work.  

In the case of POVMs with two outcomes, we shall assume that the system $\rho\in \C^d$ is coupled to a probe given by a two-level system $\beta\in\C^2$, that we fix initially in $\ketbra 0 0$ (in the perfect measurement context). Considering a probe observable $A=\lambda_0\ketbra 0 0+\lambda_1 \ketbra 1 1$, the probability of obtaining the outcome $i\in\{0,1\}$ system will induce a POVM with two outcomes $\mathcal{A}$. This POVM depends on the structure of the interaction between the system and the probe modelled by a bipartite unitary operator $U\in\mathcal U(d\cdot2)$. Next, to introduce our main \emph{ physical noise model} based on the \emph{indirect measurement process}, we shall assume that initially the probe is given by a general two-level state $\beta\in\C^2$ which is interpreted as a noisy perturbation of $\ketbra 00$ (the probe is not perfectly prepared or is perturbed before interacting with the system). In this context we still consider the same probe observable $A=\lambda_0\ketbra 0 0+\lambda_1 \ketbra 1 1$. This time the probability of obtaining the outcome $i\in\{0,1\}$ will induce the POVM $\mathcal{A}^{\beta}$ that will depend on $\beta$ and the block elements of $U$ (as in the perfect case). In our model, we shall assume that the unitaries generating the evolution are random. This consideration is an approximation of a possible description of different unknown degrees of freedom generating the evolution with a random Hamiltonian of the system coupled with a probe. The use of random unitaries will allow us to simplify the description of our model as a first approximation. Our noise model \emph{emerges naturally}, by considering a random model for $U$ and taking the average of $\mathcal{A}^{\beta}$ with respect to the distribution of the interaction unitary $U$. This procedure gives rise to the \emph{noisy two outcome POVM} $\mathbb E[\mathcal{A}^{\beta}]=(\mathbb{E}[A_0^{\beta}],\mathbb{E}[A_1^{\beta}])$, where its elements are given by: 
\begin{equation*}
    \mathbb{E}[A_0^{\beta}]=\beta_{00}\,A_0+\beta_{11}\,\frac{\Tr[A_1]}{d}I_d\quad\text{and}\quad\mathbb{E}[A_1^{\beta}]=\beta_{00}\,A_1+\beta_{11}\,\frac{\Tr[A_0]}{d}I_d.
\end{equation*}

Naturally, we obtain a noisy POVM of the form \eqref{eq:noisePOVM}. The particularity is that it is close to the depolarizing noise model where the indices of the effect operators inside the trace are switched. While usual models such as uniform or depolarizing are usually introduced as ad-hoc noisy POVMs, our model is physically motivated through indirect measurement. Interestingly, we end up with a "model close to" the depolarising model where the usual noise parameters are switched.

In the rest of the article, we give a complete generalization of our noise model for any number of outcomes $i\in[0,N]$ for arbitrary $N\in\mathbb N$. In this general context, we describe physically motivated effective POVM $\mathcal{A}^{\beta}$ with $N+1$ outcomes. Again, we concentrate on averaged POVM $\mathbb E[\mathcal{A}^{\beta}]$ by considering random models for the involved unitary operator.

Finally, we shall exploit our noise model by providing some applications for the compatibility of quantum measurements. In particular, we compare it with the usual noise models considered in the literature. 
\bigskip

The paper is organized as follows. In Section \ref{sec:noisy-POVMs}, we recall the notion of compatibility and the different type of noise models known in the literature. In Section \ref{sec: two level model}, we introduce our main \emph{ physically noise model based on an indirect measurement process} for POVMs with two outcomes.
In Section \ref{sec: noise model in higher dim}, we will generalize the results for POVMs with more than two outcomes. In Section \ref{sec: examples}, we will give examples of the application of our noise model to the compatibility of quantum measurements. 
 
\section{Quantum measurements}\label{sec:noisy-POVMs}
In this section, we will recall basic notions from quantum information theory. We shall recall the (in-)compatibility of quantum measurement and the different types of noise models established in the literature.
In Quantum Information theory, a quantum state is described by the framework of \emph{density matrices} in a finite-dimensional Hilbert space $\mathcal{H}\cong\C^d$. Formally we shall denote by the set of quantum states by $\mathcal{M}_d^{1,+}$ defined as \begin{equation*}
    \M^{1,+}_d := \{ \rho \in \M_d \, : \, \rho \geq 0 \text{ and } \Tr \rho = 1\},
\end{equation*}
The set of density matrices encodes all the information about the physical system. 
One of the main differences between classical mechanics and quantum theory, the measurement outcomes of a given experiment are intrinsically probabilistic. The celebrated \emph{Born rule} allows us to obtain the probability of a given outcome. In general, the measurement process is characterized by Positive Operator Valued Measures (POVMs)\cite{nielsen00}. Formally a \emph{positive operator valued measure} on $\M_d$ with $N$ outcomes is a $N$-tuple $\mathcal A=(A_1, \ldots, A_N)$ of self-adjoint operators from $\M_d$ which are positive semidefinite and sum up to the identity:
    $$\forall i \in [N], \quad A_i \geq 0 \qquad \text{ and } \qquad \sum_{i=1}^N A_i = I_d.$$
    When measuring a quantum state $\rho$ with the apparatus described by $\mathcal A$, we obtain a random outcome from the set $[N]$:
    $$\forall i \in [N], \qquad \mathbb P(\text{outcome} = i) = \Tr[\rho A_i].$$
 We shall write $[0,N]:= \{0,1, \ldots, N\}$ for the set of the $N+1$ positive integers and the set for $N$ positive elements are given by $[N]:=\{1,\cdots,N\}$. In particular, if the POVM framework reduces to the \emph{projective measurements} (or \emph{von Neumann measurements}) the measurement are projectors of the form $A_i=\ketbra{a_i}{a_i}$. 
 The POVM framework, allows us to understand only the measurement outcome which is relevant from given experiments without knowing the microscopic details of the measurement device see figure \ref{fig:measurement}
for an illustration.
Another main difference between classical physics and quantum mechanics is the notion of \emph{compatibility}. In classical mechanics, the measurement does not affect the physical system. However, it is well known since the discovery of the quantum theory the existence of measurements that cannot be measured at the same time we called incompatible, and others that are compatible. We say that two POVMs $\mathcal{A}=(A_1,\cdots,A_N)$ and $\mathcal{B}=(B_1,\cdots,B_M)$ are compatible if there exists a third POVM $\mathcal{C}=(C_{11},\cdots,C_{NM})$ from where they are marginals, see figure \ref{fig:compatibility} for an illustration. Formally, two POVMs $\mathcal{A}$ and $\mathcal{B}$ are compatible if there exists a joint POVM $\mathcal{C}=(C_{11},\cdots,C_{NM})$ such that:
    \begin{equation*}
        \forall i \in [N],\,  A_i= \sum_{j=1}^M C_{ij},\quad
        \forall j \in [M],\,  B_j = \sum_{i=1}^NC_{ij}.
    \end{equation*}
    In particular, the definition of compatibility reduces to the commutativity for Projective measurements. Moreover, the definition of compatibility can be extended naturally to \emph{tuples of POVMs} see \cite{heinosaari2016invitation} and the references therein for more details. 
    \begin{figure}
\centering
    \includegraphics[height=0.7\textheight]{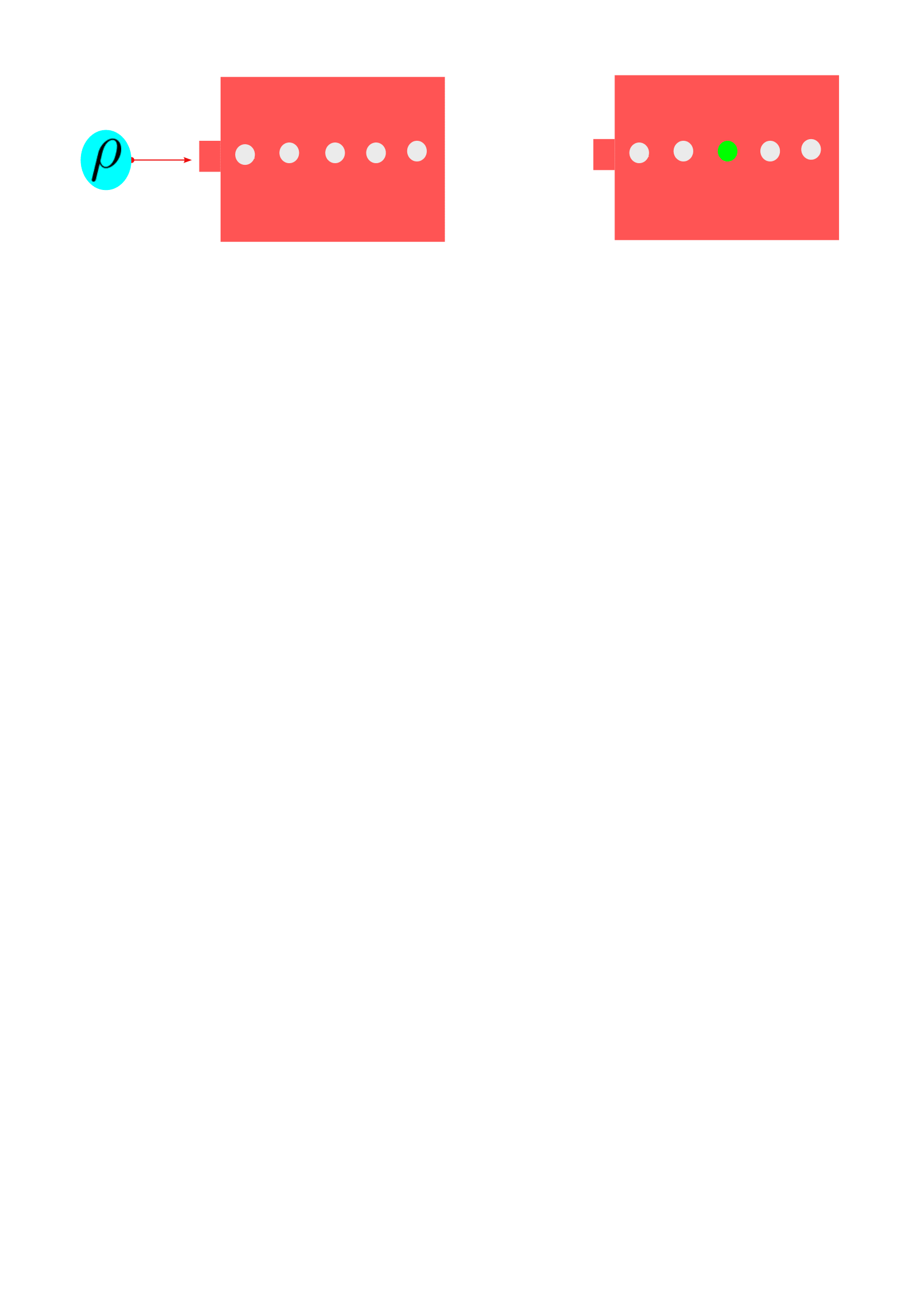}
    \vspace{-14cm}
    \caption{Diagrammatic representation of a quantum measurement apparatus. The device has an input canal and a set of $5$ LEDs which will turn on when the corresponding outcome is achieved. After the measurement is performed, the particle is destroyed, and the apparatus displays the classical outcome (here, $3$).}\label{fig:measurement}
    \end{figure}
    In the following, we shall recall different types of \emph{noise models} considered in the literature. A noise model is a POVM constructed from a convex combination with a parameter $\alpha\in[0,1]$ of an original POVM $\mathcal{A}$ and a \emph{trivial} POVM $\mathcal{T}$. We shall focus on two different noise models well studied in the literature: the uniform noise, and the depolarizing noise.
    
    \begin{figure}[htb!]
    \centering
    \includegraphics[width=0.9\textwidth]{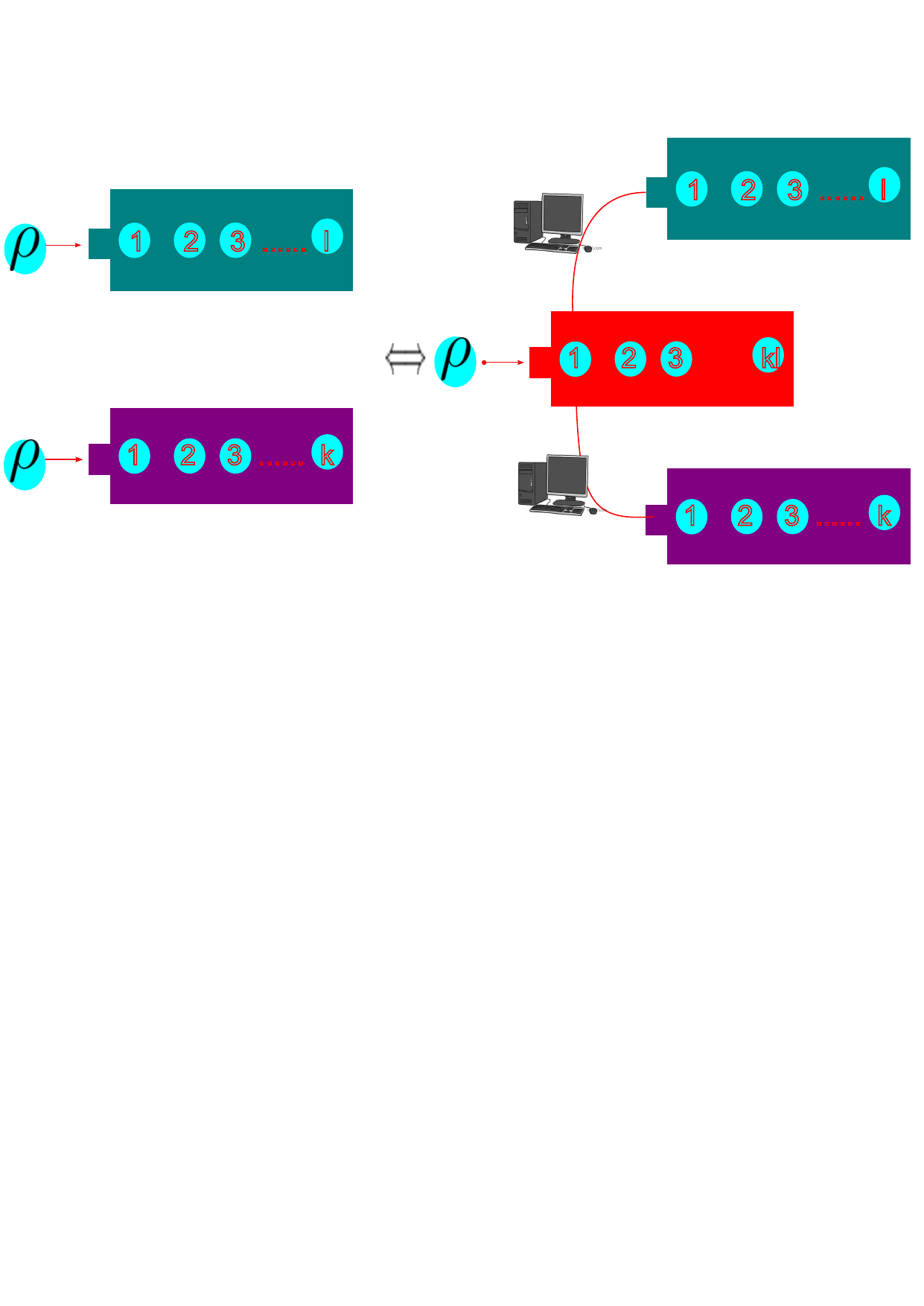}
    \vspace{-11cm}
    \caption{The joint measurement of $A$ and $B$ is simulated by by a third measurement $C$, followed by classical post-processing.}
    \label{fig:compatibility}
\end{figure}
 \begin{itemize}
     \item The uniform noise model for $\alpha\in[0,1]$ is defined by: 
     \begin{equation*}         \mathcal{A}\to\mathcal{A}^{\alpha}:=\alpha \mathcal{A}+(1-\alpha)\mathcal{T}, 
     \end{equation*}
     where $\mathcal{T}:=(t_1 I,\cdots,t_N I)$, with $t_i:=1/N$ for all $i\in[N]$.
     \vspace{5mm}
     \item The depolarizing noise for $\alpha\in[0,1]$ is defined by:
     \begin{equation*}         \mathcal{A}\to\mathcal{A}^{\alpha}:=\alpha \mathcal{A}+(1-\alpha)\mathcal{T}, 
     \end{equation*}
     where $\mathcal{T}:=(t_1 I,\cdots,t_N I)$, with $t_i:=\Tr A_i/d$ for all $i\in[N]$.
     \end{itemize}
     Noise models play several important roles in quantum information and more specifically in the compatibility of quantum measurements. One should only mention that for a given incompatible measurement, one can ask how much noise one can add to the original POVMs in order to make them compatible. This question will also be addressed when we will provide applications of our physical noise model to the compatibility question and compare it with the standard noise model in the last section of this paper.
\section{Effective noise model via indirect measurements: the two outcomes case.}\label{sec: two level model}
In this section, we will describe our \emph{effective noise model} obtained from an \emph{indirect measurement process} in the case of POVMs with two outcomes. As outlined in the Introduction, POVMs can be induced by indirect measurement (see figure \ref{fig:indirect measurement compatibility} for a pictorial representation). Here we shall consider a probe of dimension $2$ and an observable with two outcomes. This will indeed give rise to two POVMS with two outcomes. Then introducing noise in the probe preparation will induce noisy POVMs that we can compare with the usual noise models in the literature.

In the sequel, we shall describe the induced POVMs in terms of the unitary interaction between the system and the probe. Briefly speaking the description of the interaction will impose the form of the POVMs. In the case of dimension $2$ (for the probe), we are able to describe an equivalence between the form of the induced POVMs and the structure of the unitary interaction. This structure combined with noise in the probe allows us to introduce random models which allow us to derive effective POVMs that enter into the form \eqref{eq:noisePOVM}. Up to our knowledge, the main difference is that naturally, we obtain noises that have never been studied in the literature and which are physically motivated. 

Subsection \ref{subsection: the induced POVM} concerns the model of indirect measurement in the perfect situation and 
 Subsection \ref{subsec: General two level probe} concerns the models where we introduce noise in the probe preparation.

\begin{figure}[htb!]
    \centering
    \includegraphics[width=1\textwidth]{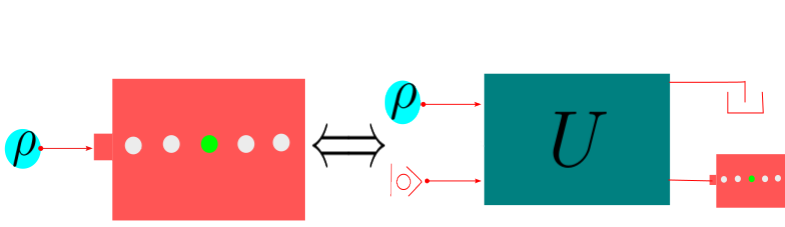}
    \caption{The measurement of a quantum state on a POVM is equivalent to an indirect measurement process, where one measures the probe on an evolved quantum state coupled to a prepared probe on $\ketbra{0}{0}$. By destroying the probe in this process, we recover the initial POVM.}
    \label{fig:indirect measurement compatibility}
\end{figure}
\subsection{The two outcomes POVM induced by a perfect indirect measurement scheme}\label{subsection: the induced POVM}
Here we consider $\mathcal H_P=\mathbb C^2$. In the following, we will assume that initially the probe is perfectly prepared in the state $\ketbra{0}{0}$.  The evolution of the initial state $\rho$ of the system coupled to the prepared probe on $\ketbra{0}{0}$, where we have: 
\begin{equation}
\rho\otimes\ketbra{0}{0}\to U\Big(\rho\otimes\ketbra{0}{0}\Big)\,U^*,
\end{equation}
where $U\in\mathcal{U}(2d)\subseteq \mathcal{M}_d\otimes\mathcal{M}_2$ given by: 
\begin{equation}
U=\sum_{i,j=0}^1 U_{ij}\otimes\ketbra{i}{j}=\begin{bmatrix}
U_{00} & U_{01}\\
U_{10} & U_{11}\end{bmatrix},
\end{equation}
with $U_{ij}\in\mathcal{M}_d(\mathbb{C})$ are the blocks of $U$.

We consider an observale $A=\lambda_0\ketbra 00+\lambda_1\ketbra 11$ on $\mathcal H_P$. We denote by $\mathbb{P}(i)$ the probability of obtaining the outcome $i\in\{0,1\}$ by measuring the probe in the evolved total system, which is given by:
\begin{equation}
\mathbb{P}(i):=\Tr\Big[U\,\rho\otimes\ketbra{0}{0}\,U^*\Big(I_2\otimes\ketbra{i}{i}\Big)\Big].
\end{equation}
In the following proposition, we will show the probe measurement on the total evolved system will induce an effective POVM $\mathcal{A}$.
\begin{proposition}\label{prop: effective POVM}
The probability of obtaining the outcome $i\in\{0,1\}$ induces a POVM with two outcomes $\mathcal{A}=(A_0,A_1)$ given by:
\begin{equation}
\mathcal A:=(A_0,A_1)\quad \text{with} \quad A_0:=U^*_{00}\,U_{00} \quad \text{and}\quad A_1:=U^*_{10}\,U_{10}.
\end{equation}
\end{proposition}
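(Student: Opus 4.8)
The plan is to compute the outcome probability $\mathbb P(i)$ directly in the block decomposition of $U$ and then use cyclicity of the trace to bring it into the Born-rule form $\mathbb P(i)=\Tr[\rho\,A_i]$, from which the two effective POVM elements can simply be read off.

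First I would write the initial joint state in block form. Since the probe is prepared in $\ketbra{0}{0}$, one has $\rho\otimes\ketbra{0}{0}=\begin{bmatrix}\rho & 0\\ 0 & 0\end{bmatrix}$, so conjugation by $U$ yields
\[
U\big(\rho\otimes\ketbra{0}{0}\big)U^*=\begin{bmatrix}U_{00}\,\rho\,U_{00}^* & U_{00}\,\rho\,U_{10}^*\\[2pt] U_{10}\,\rho\,U_{00}^* & U_{10}\,\rho\,U_{10}^*\end{bmatrix}.
\]
Next, the probe projector $I_d\otimes\ketbra{i}{i}$ selects the $i$-th diagonal block, so after multiplying and tracing only the $(i,i)$ block contributes: $\mathbb P(0)=\Tr[U_{00}\,\rho\,U_{00}^*]$ and $\mathbb P(1)=\Tr[U_{10}\,\rho\,U_{10}^*]$. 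Invoking cyclicity, $\Tr[U_{i0}\,\rho\,U_{i0}^*]=\Tr[\rho\,U_{i0}^*U_{i0}]$, which identifies $A_0=U_{00}^*U_{00}$ and $A_1=U_{10}^*U_{10}$.

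Finally I would verify that $\mathcal A=(A_0,A_1)$ is a genuine POVM. Positivity of each $A_i=U_{i0}^*U_{i0}$ is immediate, being of the form $X^*X$. The normalization $A_0+A_1=I_d$ is precisely the top-left block of the identity $U^*U=I_{2d}$, namely $U_{00}^*U_{00}+U_{10}^*U_{10}=I_d$. The argument is essentially bookkeeping with $2\times 2$ block matrices, so there is no genuine obstacle; the only points requiring care are the correct placement of adjoints when conjugating by $U$ and applying cyclicity, and the fact that the resolution of identity comes specifically from $U^*U=I$ (the orthonormality of the \emph{columns} of $U$) rather than from $UU^*=I$.
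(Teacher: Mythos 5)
Your proposal is correct and follows essentially the same route as the paper: a direct computation of $\mathbb P(i)$ using the block decomposition of $U$ (the paper expands $U=\sum_{i,j}U_{ij}\otimes\ketbra{i}{j}$ and contracts the bra-kets, which is the same bookkeeping as your explicit $2\times 2$ block conjugation), followed by cyclicity of the trace and the normalization $A_0+A_1=I_d$ extracted from $U^*U=I_{2d}$. Your explicit remark that positivity comes from the form $X^*X$ and that the resolution of identity uses column orthonormality ($U^*U=I$) rather than $UU^*=I$ is a small but welcome precision the paper leaves implicit.
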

\begin{proof}
Let $\mathbb P(0)$ and $\mathbb P(1)$ the probability of obtaining the outcome $0$ and $1$, an explicit computation shows that the probability of obtaining the outcome $i=0$ is :
\begin{align*}
\mathbb{P}(0)&=\Tr\Big[U\,\Big(\rho\otimes\ketbra{0}{0}\Big)\,U^*\,\Big(I_d\otimes\ketbra{0}{0}\Big)\Big]\\
	         &=\Tr\Big[\sum_{i,j=0}^1\,U_{ij}\otimes\ketbra{i}{j}\Big(\rho\otimes\ketbra{0}{0}\Big)\sum_{a,b=0}^1 U^*_{ab}\otimes\ketbra{b}{a}\Big(I_d\otimes\ketbra{0}{0}\Big)\Big]\\  
	         &=\sum_{i,j=0}^1\,\sum_{a,b=0}^1\Tr\Big[U_{ij}\,\rho \,U^*_{ab}\otimes\ket{i}\braket{j|0}\braket{0|b}\braket{a|0}\bra{0}\Big]\\
	    &=\Tr\Big[U^*_{00}\,U_{00}\,\rho\Big]=\Tr\Big[A_0\rho\Big],
        \end{align*}
	  where we have defined $A_0:=U^*_{00}U_{00}$.
   
   A simple computation of $\mathbb P(1)$ as the one before, shows that: \begin{equation*}
       \mathbb P(1)=\Tr\Big[U^*_{10}\,U_{10}\,\rho\Big]=\Tr\Big[A_1\rho\Big],
   \end{equation*}
	   
	   where we have defined $A_1:=U^*_{10}U_{10}.$
    
Using that $U\in\mathcal{U}(2d)$ is a unitary matrix, one can check easily that
$
U^*_{00}\,U_{00}\,+\,U^*_{10}\,U_{10}=I_d$
which implies $A_0+A_1=I_d$ which is the requirement for $A:=(A_0,A_1)$ being a POVM.

\end{proof}
In the proposition above, we have seen that every unitary $U\in\mathcal{U}(2d)$, induces by an indirect measurement scheme, an effective POVM $\mathcal{A}=(A_0,A_1)$
on the target system $\C^d$. We now ask the reverse question: \emph{given a fixed POVM $\mathcal{A}$ on $\C^d$,what is the set of interaction unitary operators $U$ which yield $\mathcal{A}$ as an effective POVM?} 
This question motivates the following definition, where we will fix the POVM elements and search for unitaries where their first column is given by the POVM elements.
\begin{definition}\label{def: the unitary set}
Given a fixed two outcomes POVM $\mathcal{A}=(A_0,A_1)$, we define a subset of unitary matrices $\mathbb{U}(A_0,A_1)$ as:
\begin{equation}
\mathbb{U}(A_0,A_1):=\Big\{U\in\mathcal{U}(2d)\Big|\,A_0=U^*_{00}\,U_{00},\,A_1=U^*_{10}\,U_{10}\Big\}\subseteq \mathcal{U}(2d).
\end{equation}
\end{definition}

In the following proposition, we will show that the set $\mathbb{U}(A_0,A_1)$ can be completely characterized with the help of three unitary matrices $V,W,Z\in\mathcal{U}(d).$
\begin{proposition}\label{prop: unitray set}
The set $\mathbb{U}(A_0,A_1)$ is completely characterized as follows:

\begin{align*}
\mathbb{U}(A_0,A_1)=\Big\{V,W,Z\in\mathcal{U}(d)&\Big|\, U_{00}=V\,\sqrt{A_0},U_{10}=W\,\sqrt{A_1},\\
&U_{01}= V\,\sqrt{A_1}\,Z^*\,,\,U_{11}=- W \,\sqrt{A_0}\,Z^*\Big\}.
\end{align*}
In other words $U\in\mathbb{U}(A_0,A_1)$ if and only if there exists $V,W,Z\in\mathcal{U}(d)$ such that
\begin{equation}
U=\begin{bmatrix}
V\sqrt{A_0}& V\,\sqrt{A_1}\,Z^*\\
W\sqrt{A_1} & -W\,\sqrt{A_0}\,Z^*\end{bmatrix}.
\end{equation}
\end{proposition}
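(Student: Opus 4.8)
The plan is to prove the two inclusions separately, reading the stated parametrization as an ``if and only if.'' The easy direction (sufficiency) is a direct verification: given any $V,W,Z\in\mathcal{U}(d)$, one forms $U$ as displayed and checks the defining relations. Since $\sqrt{A_0},\sqrt{A_1}$ are self-adjoint and $V,W$ unitary, $U_{00}^*U_{00}=\sqrt{A_0}\,V^*V\,\sqrt{A_0}=A_0$ and likewise $U_{10}^*U_{10}=A_1$, so the POVM constraints hold. To see $U\in\mathcal{U}(2d)$ it suffices to check that its two block-columns are orthonormal, i.e. $U_{00}^*U_{00}+U_{10}^*U_{10}=I_d$ (which is $A_0+A_1=I_d$), $U_{01}^*U_{01}+U_{11}^*U_{11}=Z(A_1+A_0)Z^*=I_d$, and that the cross term $U_{00}^*U_{01}+U_{10}^*U_{11}=(\sqrt{A_0}\sqrt{A_1}-\sqrt{A_1}\sqrt{A_0})Z^*$ vanishes. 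The latter holds because $A_1=I_d-A_0$ forces $A_0$ and $A_1$ to commute, hence so do their square roots; this commutativity is the one place where the constraint $A_0+A_1=I_d$ is genuinely used.

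For the converse (necessity), I would start from an arbitrary $U\in\mathbb{U}(A_0,A_1)$. The hypotheses say precisely that $|U_{00}|=\sqrt{U_{00}^*U_{00}}=\sqrt{A_0}$ and $|U_{10}|=\sqrt{A_1}$, so the polar decomposition yields $U_{00}=V\sqrt{A_0}$ and $U_{10}=W\sqrt{A_1}$ with $V,W$ unitary. (When $A_0$ or $A_1$ is singular the polar factor is only a partial isometry; I would extend it to a genuine element of $\mathcal{U}(d)$, which is possible since kernel and cokernel have equal dimension.) Left-multiplying $U$ by the block-diagonal unitary $\mathrm{diag}(V^*,W^*)$ produces a unitary $U'$ whose first block-column is the fixed isometry $c_1:=\begin{bmatrix}\sqrt{A_0}\\\sqrt{A_1}\end{bmatrix}$, with $c_1^*c_1=A_0+A_1=I_d$. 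It then remains only to determine its second block-column $c_2:=\begin{bmatrix}X\\Y\end{bmatrix}$, where $X=V^*U_{01}$ and $Y=W^*U_{11}$.

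The key observation is geometric: unitarity of $U'$ is equivalent to $c_2$ being an isometry whose range is exactly the orthogonal complement of $\mathrm{Range}(c_1)$. I would exhibit one such isometry explicitly, $c_1^{\perp}:=\begin{bmatrix}\sqrt{A_1}\\-\sqrt{A_0}\end{bmatrix}$, noting $(c_1^{\perp})^*c_1^{\perp}=A_1+A_0=I_d$ and $c_1^*c_1^{\perp}=\sqrt{A_0}\sqrt{A_1}-\sqrt{A_1}\sqrt{A_0}=0$ (commutativity again), so that its range is the $d$-dimensional space $\mathrm{Range}(c_1)^{\perp}$. Since $c_2$ and $c_1^{\perp}$ are two isometries with the same range, they differ by a unitary on the right: setting $R:=(c_1^{\perp})^*c_2\in\mathcal{U}(d)$ one checks $c_1^{\perp}R=c_1^{\perp}(c_1^{\perp})^*c_2=c_2$, because $c_1^{\perp}(c_1^{\perp})^*$ is the orthogonal projection onto the common range. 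Hence $X=\sqrt{A_1}R$ and $Y=-\sqrt{A_0}R$; putting $Z:=R^*$ and undoing the left multiplication gives $U_{01}=VX=V\sqrt{A_1}Z^*$ and $U_{11}=WY=-W\sqrt{A_0}Z^*$, exactly the claimed form.

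I expect the two points needing care to be (i) the extension of the polar partial isometries to full unitaries $V,W$ when $A_0,A_1$ fail to be invertible, and (ii) the ``same range $\Rightarrow$ differ by a right unitary'' step, which I would phrase cleanly through the projection identity above rather than by choosing bases. Everything else is routine algebra resting on the single structural fact that $A_0$ and $A_1$ commute.
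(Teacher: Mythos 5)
Your proof is correct, and for the hard (necessity) inclusion it takes a genuinely different route from the paper's. Both arguments agree on the easy inclusion (a block computation --- and you are right to flag that the vanishing of the cross term rests on $\sqrt{A_0}$ and $\sqrt{A_1}$ commuting, which follows from $A_1=I_d-A_0$; the paper calls this step ``straightforward'' and leaves the commutativity implicit) and on the first step of the converse, extracting $U_{00}=V\sqrt{A_0}$ and $U_{10}=W\sqrt{A_1}$ from polar decompositions, with your remark (i) about promoting the partial isometry to a unitary being the correct justification in the singular case. The divergence is in how the second block-column is determined. The paper performs two further polar decompositions, of $U_{11}^*W$ and $U_{01}^*V$, producing a priori \emph{two} unitaries $Z$ and $\Gamma$, and then uses the off-diagonal unitarity relation $U_{00}U_{10}^*+U_{01}U_{11}^*=0$ together with a case analysis on the invertibility and ranges of $A_0$, $A_1$ to argue $\Gamma=Z$. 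You instead normalize by $\mathrm{diag}(V^*,W^*)$ and invoke a single geometric lemma: the second block-column $c_2$ is an isometry with range $\mathrm{Range}(c_1)^{\perp}$, the explicit isometry $c_1^{\perp}=\bigl[\begin{smallmatrix}\sqrt{A_1}\\-\sqrt{A_0}\end{smallmatrix}\bigr]$ has the same range, and two isometries with equal range differ by a right unitary, via the projection identity $c_1^{\perp}(c_1^{\perp})^*c_2=c_2$ (from which $R=(c_1^{\perp})^*c_2$ satisfies $R^*R=c_2^*c_2=I_d$, so it is indeed unitary). This yields one unitary $Z=R^*$ serving simultaneously for $U_{01}$ and $U_{11}$. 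Your route is arguably cleaner at exactly the delicate point: the paper's identification $\Gamma=Z$ proceeds by distinguishing which of $A_0$, $A_1$ is invertible and does not explicitly handle the case where both are singular --- and in fact $\Gamma=Z$ as matrices is neither always forced nor needed, since only the products $\sqrt{A_1}\,\Gamma^*$ and $\sqrt{A_0}\,Z^*$ enter $U$ --- whereas your projection argument treats all degeneracies uniformly and basis-free. What the paper's equational approach buys is that it stays entirely at the level of the four block unitarity equations $(\text{S}_1)$, $(\text{S}_2)$; what yours buys is a uniform treatment of singular effect operators and a clean isolation of the one structural input, the commutativity of $\sqrt{A_0}$ and $\sqrt{A_1}$.
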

\begin{proof}
 Let $\mathbb{W}(A_0, A_1)$ be the set from the right-hand side in the statement. Straightforward matrix block computations easily yield the first inclusion 
 \begin{equation}\mathbb{W}(A_0, A_1)\subseteq \mathbb{U}(A_0, A_1)\end{equation}

Let us concentrate on the reverse inclusion:
\begin{equation}
\mathbb{U}(A_0,A_1)\subseteq\mathbb{W}(A_0,A_1).
\end{equation}
To this end let $U\in\mathbb{U}(A_0,A_1)$, that is

\begin{equation}\label{eq: A_0 and A_1}
A_0=U^*_{00}U_{00}\quad \text{and}\quad A_1=U^*_{10}U_{10}.
\end{equation}

Let the polar decomposition of $U_{00}$ given by:  
\begin{equation}\label{eq: polar decomp U_00}
U_{00}=V\,P_0,
\end{equation}
where $V\in\mathcal{U}(d)$ and $P_0$ a positive semidefinte matrix.

By combining $A_0$ from equations \eqref{eq: A_0 and A_1} and \eqref{eq: polar decomp U_00}: 
\begin{equation}
A_0=U^*_{00}U_{00}=(V\,P_0)^*V\,P_0=P_0^2\implies P_0=\sqrt{A_0},
\end{equation}
we deduce that $U_{00}$ is given by:
\begin{equation}
U_{00}=V\sqrt{A_0}.
\end{equation}

Similarly, let the polar decomposition of $U_{10}$ given by: 
\begin{equation}
U_{10}=W\,P_1,
\end{equation}
where $W\in\mathcal{U}(d)$ and $P_1$ a positive semidefinite matrix. 

We have that: \begin{equation}
    A_1=U^*_{10}U_{10}=(W\,P_1)^*W\,P_1=P_1^2\implies P_1=\sqrt{A_1},
    \end{equation}
where we can deduce the form of $U_{10}$, which is given by:
\begin{equation}\label{eq: U_10}
U_{10}=W\sqrt{A_1}.
\end{equation}

From the unitary property of $U\in\mathcal{U}(2d)$, we shall determine the other blocks of $U$.
\begin{equation}
U^*U=I_{2d}\implies(\text{S}_1):\begin{cases}
U^*_{00}\,U_{00}\,+\,U^*_{10}\,U_{10}=I_d&\\
U^*_{00}\,U_{01}\,+\,U^*_{10}\,U_{11}=0&\\
U^*_{01}\,U_{00}\,+\,U^*_{11}\,U_{10}=0&\\
U^*_{01}\,U_{01}\,+\,U^*_{11}\,U_{11}=I_d&
\end{cases}
\end{equation}
and \begin{equation}\label{eq: S_2}
UU^*=I_{2d}\implies(\text{S}_2):\begin{cases}
U_{00}\,U^*_{00}\,+\,U_{01}\,U^*_{01}=I_d&\\
U_{00}\,U^*_{10}\,+\,U_{01}\,U^*_{11}=0&\\
U_{10}\,U^*_{00}\,+\,U_{11}\,U^*_{01}=0&\\
U_{10}\,U^*_{10}\,+\,U_{11}\,U^*_{11}=I_d&
\end{cases}
\end{equation}
By using the fourth equation from the system $(\text{S}_2)$ \eqref{eq: S_2} and the equation \eqref{eq: U_10}: 
\begin{equation}
WA_1W^*\,+\,U_{11}\,U^*_{11}=I_d,
\end{equation}
where from the fact that $A_0+A_1=I_d$, we deduce that:
\begin{equation}\label{eq: A_0}
(U_{11}^*\,W)^*\,U_{11}^*\,W=A_0.
\end{equation}
Let the polar decomposition of $U_{11}^*\,W$ given by: 
\begin{equation}\label{eq: U_11 polar decomp}
U_{11}^*W=-Z\,Q,\end{equation}
where $Z\in\mathcal{U}(d)$ and positive semidefinite $Q$.

By combining the equations \eqref{eq: A_0} and \eqref{eq: U_11 polar decomp}, we have:
\begin{equation}
(U_{11}^*\,W)^*\,U_{11}^*\,W=Q^2=A_0\implies Q=\sqrt{A_0},
\end{equation}
where we can deduce the form of $U_{11}$, which is given by : \begin{equation}
U_{11}=-W\,\sqrt{A_0}\,Z^*.
\end{equation}

The only remaining element we need to determine is $U_{01}$, for that we shall use the first equation of the system $(\text{S}_2)$ \eqref{eq: S_2} and from $A_0+A_1=I_d$, we have: 
\begin{equation}\label{eq: A_1}
U_{00}\,U^*_{00}\,+\,U_{01}\,U^*_{01}=I_d\iff (U_{01}^*\,V)^*U_{01}^*\,V=A_1.
\end{equation}
Let the polar decomposition of $U_{01}^*\,V$ given by:
\begin{equation}\label{eq: U_01 polar decomp}
U_{01}^*\,V=\Gamma\,\tilde{Q},
\end{equation}
with $\Gamma\in\mathcal{U}(d)$ and $\tilde{Q}$ positive matrix. 

By combining the equations \eqref{eq: U_01 polar decomp} and \eqref{eq: A_1}, we have:
\begin{equation}
(U_{01}^*\,V)^*U_{01}^*\,V=\tilde{Q}^2=A_1\implies\tilde{Q}=\sqrt{A_1},
\end{equation}
where we can deduce the form of $U_{01}$, which is given by: 
\begin{equation}
     U_{01}=V\,\sqrt{A_1}\Gamma^*.
\end{equation}
The only remaining matrix to determine is $\Gamma$, for that we shall use the second equation of $(\text{S}_2)$ \eqref{eq: S_2} and analyze the invertibility of the effect operators $A_0$ and $A_1$. 

From the second equation of $(\text{S}_2)$ \eqref{eq: S_2} we have:
\begin{equation}
U_{00}\,U^*_{10}\,+\,U_{01}\,U^*_{11}=0\iff\sqrt{A_1}\,\sqrt{A_0}=\sqrt{A_1}\,\Gamma^*\,Z\,\sqrt{A_0}
\end{equation}

Remark that $\mathrm{Ran}(A_0)+\mathrm{Ran}(A_1)=\C^d$, to discuss the inversibility $A_0$ and $A_1$, we shall distinguish two cases:
\begin{itemize}
    \item Assuming $A_0$ invertible and $A_1$ invertible only on its range:
    
    Let $\C^d\ni x=x_0+x_1$ with $x_i\in \mathrm{Ran}(A_i)$, and $y_1^T:=x_1^T(\sqrt{A_1})^{-1}$, we have: 
\begin{equation}
y_1^T\sqrt{A_1}=x_1^T\,I_d\,|_{\mathrm{Ran}(A_1)}=x_1^T\,I_d|_{\mathrm{Ran}(A_1)}\Gamma^*\,Z.
    \end{equation}
    where $I_d|_{\mathrm{Ran}(A_1)}$ is the identity on the range of $A_1$. 
    
    Hence, we have that: 
    \begin{equation}
    \Gamma^*Z\,|_{\mathrm{Ran}(A_1)}=I_d\,|_{\mathrm{Ran}(A_1)}\implies \Gamma|_{\mathrm{Ran}(A_1)}=Z|_{\mathrm{Ran}(A_1)}
    \end{equation}
    \item Assuming $A_1$ invertible and $A_0$ invertible only on its range:
    
    As before $\C^d\ni x=x_0+x_1$ with $x_i\in \mathrm{Ran}(A_i)$, and $y_0:=(\sqrt{A_0})^{-1}x_0$ we have: \begin{equation}
    \sqrt{A_0}\,y_0=\,I_d\,|_{\mathrm{Ran}(A_0)}x_0=\Gamma^*\,Z\,I_d|_{\mathrm{Ran}(A_1)}\,x_0,
    \end{equation}
    where $I_d|_{\mathrm{Ran}(A_0)}$ is the identity on the range of $A_0$. 
    
    We obtain that: 
    \begin{equation}
    \Gamma^*Z\,|_{\mathrm{Ran}(A_0)}=I_d\,|_{\mathrm{Ran}(A_0)}\implies \Gamma|_{\mathrm{Ran}(A_0)}=Z|_{\mathrm{Ran}(A_0)}.
    \end{equation}
By combining the two consequences above we obtain:
\begin{equation}
\Gamma=\Gamma|_{\mathrm{Ran}(A_0)}+\Gamma|_{\mathrm{Ran}(A_1)}=Z|_{\mathrm{Ran}(A_0)}+Z|_{\mathrm{Ran}(A_0)}=Z
\end{equation}
\end{itemize}
Therefore, we can deduce in both cases where $A_0$ and $A_1$ are both invertible and only one of the two is invertible, that $U_{01}$ is given by:
\begin{equation}
U_{01}=V\,\sqrt{A_1}\,Z^*.
\end{equation}
This ends the proof of the second inclusion, therefore the end of the proof of the proposition.

\end{proof}

\begin{remark}
The set $\mathbb U(A_0,A_1)$ is not a subgroup of $\mathcal{U}(2d)$, one need to check that $I\notin\mathbb U(A_0,A_1)$. From the definition given above we have $U_{00}=V\sqrt{A_0}$ and $U_{10}=W\sqrt{A_1}$ for $V,W\in\mathcal{U}(d)$. Let $I_{2d}\in\mathcal{U}(2d)$ we have $U_{00}=V\,\sqrt{A_0}=I_d$ and $U_{10}= W\,\sqrt{A_1}=0_d$, hence one can check that $A_0+A_1\neq I_d$, therefore $\mathbb U(A_0,A_1)$ is not a subgroup of $\mathcal{U}(2d)$.
\end{remark}

\begin{remark}\label{remark: Group orbit}
Let the action $\circ':\,\mathcal{U}(d)^{\times 3}\curvearrowright \mathbb{U}(A_0,A_1)$ defined by:
\begin{align*}
\circ':\,\mathcal{U}(d)^{\times 3}\times \mathbb{U}(A_0,A_1)&\to \mathbb{U}(A_0,A_1),\\
\Big((V,W,Z),U\Big)&\to(V,W,Z)\circ'U.
\end{align*}
For $U\in\mathbb U(A_0,A_1)$, the action of $\circ'$ is defined by:
\begin{align*}
(V,W,Z)\circ U:&=
\begin{bmatrix}
V&0\\
0 &W\end{bmatrix}\begin{bmatrix}
V'\sqrt{A_0}& V'\,\sqrt{A_1}\,Z'^*\\
W'\sqrt{A_1} & -W'\,\sqrt{A_0}\,Z'^*\end{bmatrix}\begin{bmatrix}
I& 0\\
0 & Z^*\end{bmatrix}\\
&=\begin{bmatrix}
\tilde V\sqrt{A_0}& \tilde V\,\sqrt{A_1}\,\tilde Z'^*\\
\tilde W'\sqrt{A_1} & -\tilde W'\,\sqrt{A_0}\,\tilde Z'^*\end{bmatrix}\in\mathbb{U}(A_0,A_1),
\end{align*}
where $\tilde V=VV'$,$\tilde W=WW'$ and $\tilde Z=ZZ'$ are in $\mathcal{U}(d)$.

With this at hand, one can understand the set $\mathbb{U}(A_0,A_1)$ in the proposition \ref{prop: unitray set} as the orbit of the simple matrix $\tilde U$ given by:
$$\tilde U:=\begin{bmatrix}
\sqrt{A_0}& \sqrt{A_1}\\
\sqrt{A_1} & -\sqrt{A_0}\end{bmatrix}.$$
We can also check that the following property holds:
\begin{align*}
\Big((V,W,Z)\circ '(V',W',Z')\Big)\circ \tilde U=(VV',WW',ZZ')\circ\tilde{U}.
\end{align*}
where $\circ'$ is the group action on $\mathcal{U}(d)^{\times3}$.
\end{remark}

\subsection{The noisy two outcomes POVM induced by an imperfect indirect measurement scheme}\label{subsec: General two level probe}
To introduce our main \emph{physical noise model}, we shall assume that initially the quantum state $\rho$ is coupled to a general two-level probe $\beta$, and the unitaries are random. The imperfect measurement of the probe will induce a general POVM $\mathcal{A}^{\beta}$. The average over the randomness of the POVM $\mathcal{A}^{\beta}$ will induce our physical noise model.

In the basis $\{\ket 0,\ket 1\}$ we write $\beta\in\mathcal{M}_2^{1,+}$ as: 
\begin{equation*}
\beta=\sum_{i,j=0}^1\beta_{ij}\ketbra{i}{j}\in\mathcal{M}_{2}^{1,+}.
\end{equation*}
Again, we consider the evolution of the form
\begin{equation*}
    \rho\otimes\beta\to U\Big(\rho\otimes\beta\Big)U^*,
\end{equation*}
with $U\in\mathbb U(A_0,A_1)$. 

We shall denote by $\mathbb P_{\beta}(i)$ the probability of obtaining the outcome $i\in\{0,1\}$ on the evolved quantum state $\rho$ coupled to the probe $\beta$. In the following Proposition, we shall show that the outcome probability of obtaining $i\in\{0,1\}$ will induce an effective POVM $\mathcal{A}^{\beta}$ that will depend on the probe.
\begin{proposition}\label{Prop: general two level probe}
Let the evolution of two-level probe $\beta$ coupled to a quantum state $\rho$ governed by $U\in\mathbb{U}(A_0,A_1)$ with associated unitary operators $V,W,Z$. The probability of obtaining the outcome $i\in\{0,1\}$ given by $\mathbb P_{\beta}(i)$ induces an effective POVM with two outcomes $\mathcal{A}^{\beta}$ given by:
\begin{equation*}
    \mathcal{A}^{\beta}=(A_0^{\beta},A_1^{\beta}),
\end{equation*}
where $A_0^{\beta}$ and $A_1^{\beta}$ are explicitly given by: 
\begin{align*}
    A_0^{\beta}&=\beta_{00}\,A_0+\beta_{01}\,Z\,\sqrt{A_1}\,\sqrt{A_0}+\beta_{10}\,\sqrt{A_1}\,\sqrt{A_0}\,Z^*+\beta_{11}\,Z^*\,A_1\,Z,\\
    A_1^{\beta}&=\beta_{00}\,A_1-\beta_{01}\,Z\,\sqrt{A_1}\,\sqrt{A_0}-\beta_{10}\,\sqrt{A_1}\,\sqrt{A_0}\,Z^*+\beta_{11}\,Z^*\,A_0\,Z. 
\end{align*}
\end{proposition}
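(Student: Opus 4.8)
The plan is to compute the outcome probability $\mathbb{P}_\beta(i)=\Tr\big[U(\rho\otimes\beta)U^*(I_d\otimes\ketbra{i}{i})\big]$ directly, exactly as in the proof of Proposition \ref{prop: effective POVM}, but now retaining all four matrix elements $\beta_{jb}$ of the probe state rather than only the $(0,0)$ entry. First I would expand $U=\sum_{i,j}U_{ij}\otimes\ketbra{i}{j}$ and $\beta=\sum_{j,b}\beta_{jb}\ketbra{j}{b}$, so that
\begin{equation*}
U(\rho\otimes\beta)U^*=\sum_{i,j,a,b}\beta_{jb}\,U_{ij}\,\rho\,U_{ab}^*\otimes\ketbra{i}{a}.
\end{equation*}
Tracing against the probe projector $\ketbra{i}{i}$ forces the ket-row index and the index $a$ both to equal $i$, killing every other term, and after cyclicity of the trace I would read off
\begin{equation*}
\mathbb{P}_\beta(i)=\Tr[\rho\,A_i^\beta],\qquad A_i^\beta=\sum_{j,b=0}^1\beta_{jb}\,U_{ib}^*\,U_{ij}.
\end{equation*}
This is the conceptual heart of the argument: the probe projector fixes the row index $i$ of the blocks, while the four probe matrix elements $\beta_{jb}$ range over the column indices.

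Next I would substitute the explicit block parametrisation from Proposition \ref{prop: unitray set}, namely $U_{00}=V\sqrt{A_0}$, $U_{01}=V\sqrt{A_1}Z^*$, $U_{10}=W\sqrt{A_1}$, $U_{11}=-W\sqrt{A_0}Z^*$, into $A_0^\beta=\sum_{j,b}\beta_{jb}U_{0b}^*U_{0j}$ and $A_1^\beta=\sum_{j,b}\beta_{jb}U_{1b}^*U_{1j}$, and simplify the four contributions term by term using $V^*V=W^*W=I_d$ and the self-adjointness of $\sqrt{A_0},\sqrt{A_1}$. The $(0,0)$ term gives $\beta_{00}A_0$ (resp.\ $\beta_{00}A_1$), the off-diagonal $(0,1)$ and $(1,0)$ terms give the mixed products $\sqrt{A_1}\sqrt{A_0}$ dressed on the left by $Z$ and on the right by $Z^*$, and the $(1,1)$ term collapses to a unitary conjugate of $A_1$ (resp.\ $A_0$) by $Z$. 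A small but essential point I would flag here is that $A_0+A_1=I_d$ forces $A_0$ and $A_1$ to commute, hence $\sqrt{A_0}$ and $\sqrt{A_1}$ commute as well; this is exactly what lets me rewrite both off-diagonal contributions in the single common ordering $\sqrt{A_1}\sqrt{A_0}$ appearing in the statement. Assembling the four pieces yields the claimed formulas for $A_0^\beta$ and $A_1^\beta$.

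Finally, to confirm $\mathcal{A}^\beta$ really is a POVM, positivity comes for free: $\mathbb{P}_\beta(i)$ is the trace of a product of positive operators, hence nonnegative for every state $\rho$, so $\Tr[\rho A_i^\beta]\ge 0$ for all $\rho\ge0$ forces $A_i^\beta\ge0$ (equivalently, $A_i^\beta=\sum_{j,b}\beta_{jb}U_{ib}^*U_{ij}$ is manifestly positive after diagonalising $\beta$). Normalisation $A_0^\beta+A_1^\beta=I_d$ follows by inspection: the off-diagonal terms cancel, the diagonal terms recombine through $A_0+A_1=I_d$ and the unitarity of $Z$, leaving $(\beta_{00}+\beta_{11})I_d=(\Tr\beta)\,I_d=I_d$. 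I expect the only delicate step to be the careful bookkeeping of the four $(j,b)$ blocks together with the square-root reordering; everything else is routine matrix algebra building on Propositions \ref{prop: effective POVM} and \ref{prop: unitray set}.
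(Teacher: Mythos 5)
Your proposal is correct and follows essentially the same route as the paper's own proof: expand the trace to read off $\mathbb{P}_\beta(i)=\Tr[\rho\,A_i^\beta]$ with $A_i^\beta=\sum_{c,d=0}^1\beta_{cd}\,U_{id}^*\,U_{ic}$, then substitute the block parametrization of Proposition~\ref{prop: unitray set} (your extra remarks on the commutation of $\sqrt{A_0}$ and $\sqrt{A_1}$ and on the POVM axioms are correct, if not strictly needed). One cosmetic caveat: with the parametrization $U_{01}=V\sqrt{A_1}\,Z^*$ and $U_{11}=-W\sqrt{A_0}\,Z^*$, the $(1,1)$ terms assemble to $Z A_1 Z^*$ and $Z A_0 Z^*$ rather than the $Z^* A_1 Z$ and $Z^* A_0 Z$ written in the statement, so your claim that the four pieces give ``the claimed formulas'' holds only after the harmless relabeling $Z\mapsto Z^*$ --- an inconsistency already present in the paper itself and immaterial for the subsequent averaging in Theorem~\ref{Th: general noisy two level probe}.
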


\begin{proof}
An explicit computation of $\mathbb{P}_{\beta}(i)$ shows that:
\begin{align*}
    \mathbb P_{\beta}(i)&:=\Tr\Big[U\,(\rho\otimes\beta)\,U^*\,(I_d\otimes\ketbra{i}{i})\Big]\\
    &=\sum_{c,d=0}^1\beta_{cd}\,\Tr\Big[U^*_{id}\,U_{ic}\,\rho\Big]\\
    &=\Tr[A_i^{\beta}\rho],
\end{align*}
where we have defined $A_i^{\beta}:=\sum_{c,d=0}^1\beta_{cd}\,U^*_{id}\,U_{ic}.$

By using Proposition \ref{prop: unitray set}, the effects of the POVM $\mathcal{A}^{\beta}$ are given explicitly by:
    \begin{align*}
    &A_0^{\beta}:=\sum_{c,d=0}^1\beta_{cd}\,U^*_{0d}\,U_{0c}=\beta_{00}\,A_0+\beta_{01}\,Z\,\sqrt{A_1}\,\sqrt{A_0}+\beta_{10}\,\sqrt{A_1}\,\sqrt{A_0}\,Z^*+\beta_{11}\,Z^*\,A_1\,Z\\
    &A_1^{\beta}:=\sum_{c,d=0}^1\beta_{cd}\,U^*_{1d}\,U_{1c}=\beta_{00}\,A_1-\beta_{01}\,Z\,\sqrt{A_1}\,\sqrt{A_0}-\beta_{10}\,\sqrt{A_1}\,\sqrt{A_0}\,Z^*+\beta_{11}\,Z^*\,A_0\,Z
    \end{align*}
\end{proof}

To introduce our noise model from an indirect measurement process, we shall define a \emph{centered 1-design} probability measure on $\mathcal{U}(d)$.
\begin{definition}\label{def: 1 design}
A probability distribution $\mu$ on $\mathcal U(d)$ is called a \emph{centered 1-design} if
$\forall X\in\M_d$

\begin{equation}
\mathbb{E}_{Z\sim\mu}[Z\,X]=\mathbb{E}_{Z\sim\mu}[Z^*\,X]:=0
\end{equation}
and 
\begin{equation}
\mathbb{E}_{Z\sim\mu}[Z\,X\,Z^*]:=\frac{\Tr[X]}{d}I_d.
\end{equation}
\end{definition}
\begin{remark}
     The \emph{Haar probability distribution} is an example of a centered $1$-design.
\end{remark}
As we have described in the introduction of this subsection, our physical noise model will emerge naturally by taking the average of the effective POVM $\mathcal{A}^{\beta}$. For that, we shall assume that $Z\in\mathcal{U}(d)$ are sampled from a centered 1-design distribution.
\begin{theorem}\label{Th: general noisy two level probe}
Assume $U\in\mathbb U(A_0,A_1)$ and assume that the corresponding $Z\in\mathcal{U}(d)$ is sampled from a centered 1-design distribution $\mu$. The expectation value of the induced noisy effective POVM  $\mathcal{A}^{\beta}$  will induce a general noise model described by the two outcomes POVM $\mathbb{E}_{Z\sim\mu}[\mathcal{A}^{\beta}]$ given by:
\begin{equation*}
    \mathbb{E}_{Z\sim\mu}[\mathcal{A}^{\beta}]=(\mathbb{E}_{Z\sim\mu}[A_0^{\beta}],\mathbb{E}_{Z\sim\mu}[A_1^{\beta}]),
\end{equation*}
where $\mathbb{E}_{Z\sim\mu}[A_0^{\beta}]$ and $\mathbb{E}_{Z\sim\mu}[A_1^{\beta}]$ are given by: 
\begin{equation*}
\mathbb{E}_{Z\sim\mu}[A_0^{\beta}]=\beta_{00}\,A_0+\beta_{11}\,\frac{\Tr[A_1]}{d}I_d\quad\text{and}\quad\mathbb{E}_{Z\sim\mu}[A_1^{\beta}]=\beta_{00}\,A_1+\beta_{11}\,\frac{\Tr[A_0]}{d}I_d.
\end{equation*}
In other words, we have:  
\begin{equation*}
(\mathbb{E}_{Z\sim\mu}[A_0^{\beta}]\,,\,\mathbb{E}_{Z\sim\mu}[A_1^{\beta}]) = \beta_{00}(A_0\,,\,A_1) + \beta_{11} \left(\frac{\Tr[A_1]}{d}I_d\,,\,\frac{\Tr[A_0]}{d}I_d\right).
\end{equation*}
 
\end{theorem}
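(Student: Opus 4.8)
The plan is to start from the explicit expressions for $A_0^\beta$ and $A_1^\beta$ furnished by Proposition \ref{Prop: general two level probe} and to compute $\mathbb{E}_{Z\sim\mu}$ of each by linearity of the expectation, treating the four summands (indexed by $\beta_{00},\beta_{01},\beta_{10},\beta_{11}$) separately. The only randomness sits in the blocks $U_{01},U_{11}$ through the factor $Z$, so $A_0$ and $A_1$ themselves are deterministic; the entire content of the computation is to classify each summand by its degree in $Z$ and to invoke the matching identity from Definition \ref{def: 1 design}.

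First I would dispatch the $\beta_{00}$ terms: since $A_0=U_{00}^*U_{00}$ and $A_1=U_{10}^*U_{10}$ carry no $Z$, we simply have $\mathbb{E}[\beta_{00}A_i]=\beta_{00}A_i$, which already produces the noise-free piece of the claimed formula. Next, the cross terms carrying $\beta_{01}$ and $\beta_{10}$ are \emph{linear} in $Z$: one factor is a constant matrix and the other is either $Z$ (on the left) or $Z^*$ (on the right). For the left-$Z$ term I apply $\mathbb{E}_{Z\sim\mu}[ZX]=0$ with $X=\sqrt{A_1}\sqrt{A_0}$; for the right-$Z^*$ term I pull the constant matrix out of the expectation and use $\mathbb{E}_{Z\sim\mu}[Z^*]=0$, which is the centering condition $\mathbb{E}[Z^*X]=0$ specialized to $X=I_d$. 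Hence both off-diagonal contributions vanish, which is precisely why the cross terms of $\beta$ drop out of the averaged POVM.

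The remaining, and only genuinely quadratic, contributions are the $\beta_{11}$ terms, which appear in the conjugation (twirl) shape $Z\,A_i\,Z^*$. Here I invoke the second $1$-design identity $\mathbb{E}_{Z\sim\mu}[ZXZ^*]=\tfrac{\Tr[X]}{d}I_d$ with $X=A_1$ for $A_0^\beta$ and $X=A_0$ for $A_1^\beta$, turning these into $\beta_{11}\tfrac{\Tr[A_1]}{d}I_d$ and $\beta_{11}\tfrac{\Tr[A_0]}{d}I_d$ respectively. Assembling the three cases yields the stated expressions, and a quick sanity check that $\mathbb{E}[A_0^\beta]+\mathbb{E}[A_1^\beta]=\beta_{00}I_d+\beta_{11}I_d=I_d$ (using $A_0+A_1=I_d$, $\Tr[A_0]+\Tr[A_1]=d$, and $\beta_{00}+\beta_{11}=1$) confirms the average is again a POVM. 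The one point demanding care is matching each summand to the correct axiom: the quadratic term must be read in the one-sided conjugated form $Z(\cdot)Z^*$ for the twirl identity to apply, and the linear terms must be read as $Z$-on-the-left versus $Z^*$-on-the-right so that the two centering conditions are used appropriately; beyond this bookkeeping the argument is a direct term-by-term application of Definition \ref{def: 1 design}.
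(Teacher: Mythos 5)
Your proposal is correct and follows essentially the same route as the paper's proof: expand the expressions from Proposition \ref{Prop: general two level probe} by linearity, kill the $\beta_{01},\beta_{10}$ cross terms with the centering conditions of Definition \ref{def: 1 design}, and reduce the $\beta_{11}$ terms to $\frac{\Tr[A_i]}{d}I_d$ via the twirl identity. One small transcription point: the quadratic terms actually appear as $Z^*A_iZ$ rather than $ZA_iZ^*$, so strictly one needs $\mathbb{E}_{Z\sim\mu}[Z^*XZ]=\frac{\Tr[X]}{d}I_d$, which follows from the stated identity by Hilbert--Schmidt duality (the depolarizing map is self-adjoint) and is glossed over in exactly the same way in the paper's own proof.
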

\begin{proof}
From Proposition \ref{Prop: general two level probe} we have shown that:
\begin{align}
A_0^{\beta}&=\beta_{00}\,A_0+\beta_{01}\,Z\,\sqrt{A_1}\,\sqrt{A_0}+\beta_{10}\,\sqrt{A_1}\,\sqrt{A_0}\,Z^*+\beta_{11}\,Z^*\,A_1\,Z.\\
    A_1^{\beta}&=\beta_{00}\,A_1-\beta_{01}\,Z\,\sqrt{A_1}\,\sqrt{A_0}-\beta_{10}\,\sqrt{A_1}\,\sqrt{A_0}\,Z^*+\beta_{11}\,Z^*\,A_0\,Z.
\end{align}
By taking the average on $Z$ and by linearity we have: 
\begin{align*}
\mathbb{E}_{Z\sim\mu}[A_0^{\beta}]&=\beta_{00}\,A_0+\beta_{01}\,\mathbb{E}_{Z\sim\mu}[\,Z\,\sqrt{A_1}\,\sqrt{A_0}]+\beta_{10}\,\mathbb{E}_{Z\sim\mu}[\sqrt{A_1}\,\sqrt{A_0}\,Z^*]+\beta_{11}\,\mathbb{E}_{Z\sim\mu}[Z^*\,A_1\,Z].\\
&=\beta_{00}\,A_0+\beta_{11}\,\frac{\Tr[A_1]}{d}I_d.
\end{align*}
With the same computation as the previous one, we obtain  $\mathbb E_{Z\sim\mu}[A_1^{\beta}]$ as announced in the statement of the Theorem.
\end{proof}
Having established the general description of noise models from a two-level probe model $\beta$, we shall specify two different probes: the \emph{probabilistic probe} $\sigma_t$ and the \emph{cat probe} $\gamma_t$. We will show that starting from these two different types of probes when one takes the average on $Z$ they give the \emph{same type of noisy effective POVM}, hence the same type of noise model. 
\begin{definition}\label{def: the cat probe}
We shall define the probabilistic and the cat probes are given respectively by $\sigma_t$ and $\gamma_t$ for $t\in[0,1]$ as: 
\begin{equation}
\sigma_t:=(1-t)\ketbra{0}{0}+t\ketbra{1}{1}\quad\text{and}\quad\gamma_t:=\ketbra{\lambda_t}{\lambda_t}.
\end{equation}
where $\ket{\lambda_t}$ is defined as 
\begin{equation}
\ket{\lambda_t}:=\sqrt{1-t}\ket{0}+\sqrt{t}\ket{1}.
\end{equation}
\end{definition}
As a direct consequence of Theorem \ref{Th: general noisy two level probe}, we will see in the following Corollary that the resulting noise model is independent of using a probabilistic or a cat probe.
\begin{corollary}\label{Corollary: proba or cat probe}
Assuming the evolution $U\in\mathcal{U}(2d)$ of $\rho$ coupled either to the probabilistic or the cat probe given respectively by $\sigma_t$ and $\gamma_t$. The expectation value of $\mathcal{A}^{\sigma_t}$ and $\mathcal{A}^{\gamma_t}$ are equal, and we have $\mathbb{E}_{Z\sim\mu}[\mathcal{A}^{\sigma_t}]=\mathbb{E}_{Z\sim\mu}[\mathcal{A}^{\gamma_t}]$.
\end{corollary}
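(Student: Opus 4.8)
The plan is to reduce the corollary to a direct application of Theorem \ref{Th: general noisy two level probe}. The key observation is that the averaged effective POVM $\mathbb{E}_{Z\sim\mu}[\mathcal{A}^{\beta}]$ depends on the probe $\beta$ only through its two diagonal entries $\beta_{00}$ and $\beta_{11}$: in the formula of that theorem the off-diagonal weights $\beta_{01},\beta_{10}$ have already been annihilated by the centered $1$-design condition $\mathbb{E}_{Z\sim\mu}[ZX]=\mathbb{E}_{Z\sim\mu}[Z^*X]=0$. Hence it suffices to check that the two candidate probes share the same diagonal.

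First I would write out the matrix entries of both probes in the basis $\{\ket 0,\ket 1\}$. The probabilistic probe $\sigma_t=(1-t)\ketbra 00+t\ketbra 11$ is already diagonal, so $(\sigma_t)_{00}=1-t$, $(\sigma_t)_{11}=t$ and $(\sigma_t)_{01}=(\sigma_t)_{10}=0$. For the cat probe $\gamma_t=\ketbra{\lambda_t}{\lambda_t}$ with $\ket{\lambda_t}=\sqrt{1-t}\ket 0+\sqrt t\ket 1$, expanding the outer product gives $(\gamma_t)_{00}=|\braket{0|\lambda_t}|^2=1-t$, $(\gamma_t)_{11}=|\braket{1|\lambda_t}|^2=t$, while $(\gamma_t)_{01}=(\gamma_t)_{10}=\sqrt{t(1-t)}\neq 0$. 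The two probes thus differ in their coherences but share the diagonal pair $(\beta_{00},\beta_{11})=(1-t,t)$.

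Plugging $\beta=\sigma_t$ and then $\beta=\gamma_t$ into Theorem \ref{Th: general noisy two level probe} yields, in both cases,
\begin{equation*}
\mathbb{E}_{Z\sim\mu}[\mathcal{A}^{\beta}]=(1-t)(A_0,A_1)+t\left(\frac{\Tr[A_1]}{d}I_d,\frac{\Tr[A_0]}{d}I_d\right),
\end{equation*}
since only $\beta_{00}=1-t$ and $\beta_{11}=t$ survive; the nonzero coherences of $\gamma_t$ multiply $\mathbb{E}_{Z\sim\mu}[Z\sqrt{A_1}\sqrt{A_0}]$ and $\mathbb{E}_{Z\sim\mu}[\sqrt{A_1}\sqrt{A_0}\,Z^*]$, which vanish. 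Comparing componentwise gives $\mathbb{E}_{Z\sim\mu}[\mathcal{A}^{\sigma_t}]=\mathbb{E}_{Z\sim\mu}[\mathcal{A}^{\gamma_t}]$. There is essentially no obstacle here: all of the analytic content is already carried by Theorem \ref{Th: general noisy two level probe}, and what remains is the elementary verification that the amplitudes $\sqrt{1-t},\sqrt t$ of $\ket{\lambda_t}$ were chosen precisely so that the populations $|\braket{i|\lambda_t}|^2$ reproduce the mixing weights of $\sigma_t$.
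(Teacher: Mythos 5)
Your proof is correct and follows essentially the same route as the paper's: apply Theorem \ref{Th: general noisy two level probe} after observing that $\sigma_t$ and $\gamma_t$ share the diagonal $(\beta_{00},\beta_{11})=(1-t,t)$ and that the coherences of $\gamma_t$ are killed by the centered $1$-design averaging. In fact your explicit formula $\mathbb{E}_{Z\sim\mu}[\mathcal{A}^{\beta}]=(1-t)(A_0,A_1)+t\left(\frac{\Tr[A_1]}{d}I_d,\frac{\Tr[A_0]}{d}I_d\right)$ is the one consistent with Definition \ref{def: the cat probe}, whereas the paper's displayed proof swaps the roles of $t$ and $1-t$ (a harmless typo).
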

\begin{proof}
The proof of the Corollary is basically that the probabilistic and the cat probe have the same diagonal elements, moreover, the off-diagonal elements will be all canceled when we take the expectation value.

By Theorem \ref{Th: general noisy two level probe} we obtain: $$\mathbb{E}_{Z\sim\mu}[A_0^{\sigma_t}]=t\,A_0+ (1-t)\,\frac{\Tr[A_1]}{d}I_d=\mathbb{E}_{Z\sim\mu}[A_0^{\gamma_t}]\,,\,\mathbb{E}_{Z\sim\mu}[A_1^{\sigma_t}]=t\,A_1+ (1-t)\,\frac{\Tr[A_0]}{d}I_d=\mathbb{E}_{Z\sim\mu}[A_1^{\gamma_t}].$$
\end{proof}
\section{The case of arbitrary POVMs}\label{sec: noise model in higher dim}
In this section, we will generalize our \emph{ physical noise model} description from an \emph{indirect measurement process} for the case of POVMs \emph{having more than two outcomes}. Here we shall consider a probe of dimension $N+1$. As in the previous section, the POVMs can be induced by indirect measurement, where the resulting POVMs are of $N+1$ outcomes. The noise in the probe preparation will induce noisy POVMs of $N+1$ outcomes. 

The unitary interaction between the system and the probe will induce and give the form of the POVMs. This will allow us, to introduce the set of the unitaries that give rise to the POVMs. With the noisy probe and set of all the unitaries that we shall consider as random, a natural noise model emerges by taking the average over the unitaries.  

In Subsection \ref{subsec: General induced POVM}, we will give the induced POVMs in the perfect situation. In Subsection \ref{subsec: General induced POVM}, we will give our physical noise model description. 

\subsection{General induced POVM}\label{subsec: General induced POVM}

In the following subsection, we will give a general model where we will describe an emergent POVM $\mathcal{A}$ with $N+1$ outcomes. For that, as in Section \ref{sec: two level model}, we shall assume the probe is initially prepared in $\ketbra{0}{0}$ and the total system is given by $\rho\otimes\ketbra{0}{0}$. The dynamics are given by unitary matrices $U\in\mathcal U ((N+1)d)$. Explicitly the evolution of the total system, as usual, is given by:
\begin{equation*}
\rho\otimes\ketbra{0}{0}\to U\Big(\rho\otimes\ketbra{0}{0}\Big)\,U^*,
\end{equation*}
with $U\in\mathcal{U}((N+1)d)\subseteq \mathcal{M}_d\otimes\mathcal{M}_{N+1}.$

We denote by $\mathbb P(i)$ the probability of obtaining the outcome $i\in[0,N]$ by measuring the probe on the evolved total system.
In the following proposition, we will show that the measurement over the probe will induce an effective emergent POVM $\mathcal{A}$ with $ N+1$ outcomes.
\begin{proposition}\label{prop: effective general POVM}
The probability of obtaining the outcome $i\in[0,N]$ induce  an effective POVM $\mathcal{A}$ with $N+1$ outcomes given by: 
\begin{equation}
    \mathcal A:=(A_0,\cdots,A_N)\quad \text{with} \quad  A_i:=U^*_{i0}\,U_{i0},\quad \forall i\in[0,N].
\end{equation}
\end{proposition}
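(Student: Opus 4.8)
The plan is to mirror the computation in Proposition \ref{prop: effective POVM}, which is exactly the $N=1$ instance of the present statement, now carrying the block indices over the full range $[0,N]$. First I would expand the evolved state using the block decomposition $U=\sum_{i,j=0}^N U_{ij}\otimes\ketbra{i}{j}$ together with $U^*=\sum_{a,b=0}^N U^*_{ab}\otimes\ketbra{b}{a}$, so that
\begin{equation*}
U\big(\rho\otimes\ketbra{0}{0}\big)U^*=\sum_{i,j=0}^N\sum_{a,b=0}^N \big(U_{ij}\,\rho\,U^*_{ab}\big)\otimes\ketbra{i}{j}\,\ketbra{0}{0}\,\ketbra{b}{a}.
\end{equation*}
The key simplification is that the probe factor collapses: $\ketbra{i}{j}\ketbra{0}{0}\ketbra{b}{a}=\braket{j|0}\braket{0|b}\,\ketbra{i}{a}$ vanishes unless $j=0$ and $b=0$, leaving $\sum_{i,a}(U_{i0}\,\rho\,U^*_{a0})\otimes\ketbra{i}{a}$.

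Next I would insert the measurement operator $I_d\otimes\ketbra{k}{k}$ for the outcome $k\in[0,N]$ and use $\ketbra{i}{a}\ketbra{k}{k}=\braket{a|k}\ketbra{i}{k}$, which forces $a=k$, and then take the trace over both tensor factors. Since $\Tr[\ketbra{i}{k}]=\braket{k|i}=\delta_{ik}$, only the term $i=k$ survives, giving
\begin{equation*}
\mathbb{P}(k)=\Tr\big[U_{k0}\,\rho\,U^*_{k0}\big]=\Tr\big[U^*_{k0}\,U_{k0}\,\rho\big]=\Tr\big[A_k\,\rho\big],
\end{equation*}
with $A_k:=U^*_{k0}U_{k0}$, which is the claimed form of the effective operators.

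Finally I would verify that $\mathcal A=(A_0,\dots,A_N)$ is a genuine POVM. Positivity is immediate since each $A_i=U^*_{i0}U_{i0}$ is of the form $X^*X$, hence positive semidefinite. For the normalization $\sum_{i=0}^N A_i=I_d$, I would read off the $(0,0)$ block of the identity $U^*U=I_{(N+1)d}$: expanding $U^*U=\sum_{i}\sum_{b,j}(U^*_{ib}U_{ij})\otimes\ketbra{b}{j}$, the diagonal block indexed by $(0,0)$ equals $\sum_{i=0}^N U^*_{i0}U_{i0}$, and equating it with $I_d$ yields exactly $\sum_i A_i=I_d$. There is no real conceptual obstacle here; the proof is a routine generalization of the two-outcome case, and the only point requiring care is the index bookkeeping — tracking which Kronecker deltas survive in the probe factor and confirming that the normalization comes precisely from the $(0,0)$ block of the unitarity relation rather than from $UU^*=I$.
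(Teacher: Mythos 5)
Your proposal is correct and follows essentially the same route as the paper's proof, which also performs the block-index computation to show $\mathbb{P}(i)=\Tr[U^*_{i0}U_{i0}\,\rho]$ and then invokes positivity of $U^*_{i0}U_{i0}$ together with the unitarity of $U$ for the normalization $\sum_{i=0}^N A_i = I_d$. Your version merely spells out the Kronecker-delta bookkeeping and the $(0,0)$ block of $U^*U=I_{(N+1)d}$ that the paper leaves implicit as ``an explicit computation.''
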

\begin{proof}
An explicit computation of $\mathbb P(i)$ shows that: 
\begin{align*}
\mathbb{P}(i)&=\Tr\Big[U\,\Big(\rho\otimes\ketbra{0}{0}\Big)\,U^*\,\Big(I_d\otimes\ketbra{i}{i}\Big)\Big]\\
	          &=\Tr\Big[U^*_{i0}\,U_{i0}\,\rho\Big]=\Tr\Big[A_i\rho\Big].
        \end{align*}
        Where we have defined $A_i:= U^*_{i0}\,U_{i0}\geq0$.
        
        Therefore $A_i\geq0$ for all $i\in[0,N]$ and by using the unitarity of $U$ one can check easily that $\sum_{i=0}^N\,A_i=I_d$ holds, hence $\mathcal{A}:=(A_0,\cdots,A_N)$ defines a POVM.
\end{proof}
The proposition above motivates the  following definition, where we will define a set of unitaries $U\in\mathcal{U}((N+1)d)$ where for fixed $U_{i0}$ for all $i\in[0,N]$ is given by $U^*_{i0}\,U_{i0}=:A_i$. 
\begin{definition}\label{def: General unitaries}
Consider a POVM $\mathcal{A}=(A_0,\cdots,A_N)$. We define a subset of unitary matrices $\mathbb{U}\Big(\{A_i\}_{i\in[0,N]}\Big)$ as:
\begin{equation}\label{def: general unitary set}
\mathbb{U}\Big(\{A_i\}_{i\in[0,N]}\Big):=\Big\{U\in\mathcal{U}((N+1)d)\Big|\,U^*_{i0}\,U_{i0}=A_i\Big\}.
\end{equation}
\end{definition}

\begin{remark}
    It is a very difficult task to explicitly characterize the set of unitaries given above, in comparison with the set given in Definition \ref{def: the unitary set} which was fully characterized in Proposition \ref{prop: unitray set}. 
\end{remark}

\begin{remark}
The set $\mathbb U\Big(\{A_i\}_{i\in[0,N]}\Big)$ is non-empty.
To show this, we shall show it is possible to complete the matrix $U$ having as it first block column $U_{i0}=\sqrt{A_i}$. 
This is possible as long as the first $d$ columns of $U$ are orthonormal. For that, let two different columns $c_1$, $c_2$ of $U$, where $c_j\in[d]$ and $j\in\{1,2\}$. The scalar product of two columns is given by: 
\begin{equation*}
\sum_{l=1}^d\sum_{i=0}^N\Big(\sqrt{A_i}\,(l,c_1)\Big)^*\,\sqrt{A_i}(l,c_2)=\sum_{l=1}^d\sum_{i=0}^N A_i(c_1,c_2)=\delta_{c_1,c_2},
\end{equation*}
where in the equality above we have used that $\sum_{i=0}^N A_i=I_d$. 

Therefore, the first $d$ columns of $U$ are orthonormal, hence we can always complete a unitary matrix $U\in\mathbb U\Big(\{A_i\}_{i\in[0,N]}\Big)$. 

\end{remark}
\begin{remark}
The set $\mathbb{U}\Big(\{A_i\}_{i\in[0,N]}\Big)$ is invariant invariant under the action of element in $\mathcal{U}^{\times2N+1}(d)$ defined as below.

Let the action $\circ':\mathcal{U}(d)^{\times2N+1}\curvearrowright\mathbb{U}\Big(\{A_i\}_{i\in[0,N]}\Big)$ defined by: 
\begin{align*}
    \circ':\mathcal{U}(d)^{\times2N+1}\times\mathbb{U}\Big(\{A_i\}_{i\in[0,N]}\Big)&\to \mathbb{U}\Big(\{A_i\}_{i\in[0,N]}\Big),\\
    \Big((V_0,\cdots,V_N,Z_1,\cdots,Z_N),U\Big)&\to\mathbb{U}\Big(\{A_i\}_{i\in[0,N]}\Big).
\end{align*}
For $U\in \mathbb{U}\Big(\{A_i\}_{i\in[0,N]}\Big)$ the action of $\circ'$ is defined by: 
\begin{equation*}
    U'=(V_0,\cdots,V_N,Z_1,\cdots,Z_N)\circ' U:=\begin{bmatrix}
    V_0 & & \\
    & V_1\\
    & & \ddots & \\
    & & & V_N
  \end{bmatrix}
  \,U\,
  \begin{bmatrix}
    I_d& & \\
    & Z_1^*\\
    & & \ddots & \\
    & & & Z_N^*
  \end{bmatrix}.
\end{equation*}
Remark from a simple computation that:
\begin{equation*}
U_{i0}'^*U_{i0}'=U_{i0}^*V_i^*V_iU_{i0}=A_i,
\end{equation*}
therefore we have $U'\in \mathbb{U}\Big(\{A_i\}_{i\in[0,N]}\Big)$.
\end{remark}

The Definition \ref{def: General unitaries} will play a crucial role in the rest of this section to introduce our physical noise model.

\subsection{Physical noise model}\label{subsec: General noise model}
In the following subsection, we will introduce our main general \emph{physical noise} model. For that, we will assume initially that the quantum state $\rho$ is coupled to a probe given by an $(N+1)$-level quantum state $\beta\in\mathcal{M}_{N+1}^{1,+}$. 
\begin{definition}
    Let $\beta\in\mathcal{M}_{N+1}^{1,+}$ the $N+1$ level quantum state describing the probe: 
    \begin{equation*}
        \beta=\sum_{i,j=0}^N\beta_{i,j}\ketbra{i}{j} \in\mathcal{M}_{N+1}^{1,+}.
    \end{equation*}
\end{definition}
Assuming the evolution of the total system, quantum state $\rho$ coupled to a probe $\beta$ is given by: 
\begin{equation*}
    \rho\otimes\beta\to U\Big(\rho\otimes\beta\Big)U^*,
\end{equation*}
with $U\in \mathbb U\Big(\{A_i\}_{i\in[0,N]}\Big)$.
We denote by $\mathbb P_{\beta}(i)$ the probability of obtaining an outcome $i\in[0,N]$ by measuring the probe on the evolved total system.

In the following we will show the outcome probability of obtaining $i\in[0,N]$ will induce an effective noisy POVM $\mathcal{A}^{\beta}$.
 
\begin{proposition}\label{prop: effective noisy general POVM}
The probability of obtaining the outcome $i\in[0,N]$ by measuring the probe on the evolved total system, will induce an effective POVM $\mathcal{A}^{\beta}$ where: 
\begin{equation*}
    \mathcal{A}^{\beta}=(A_0^{\beta},\cdots,A_N^{\beta}),\quad\text{and}\quad A_i^{\beta}=\sum_{c,k=0}^N\,\beta_{ck}\,U^*_{ik}\,U_{ic},\quad \forall i\in[0,N].
\end{equation*}
\end{proposition}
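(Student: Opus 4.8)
The plan is to follow exactly the strategy of the two-outcome computation in Proposition \ref{Prop: general two level probe}, now carried out over the full index range $[0,N]$. First I would expand the three factors in block form: writing $U=\sum_{a,b=0}^N U_{ab}\otimes\ketbra{a}{b}$, $U^*=\sum_{p,q=0}^N U_{pq}^*\otimes\ketbra{q}{p}$, and $\beta=\sum_{c,k=0}^N\beta_{ck}\ketbra{c}{k}$, I would substitute into $\mathbb P_\beta(i)=\Tr[U(\rho\otimes\beta)U^*(I_d\otimes\ketbra{i}{i})]$ and let the trace factor across the system and probe tensor factors.

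On the probe factor, the product $\ketbra{a}{b}\,\ketbra{c}{k}\,\ketbra{q}{p}\,\ketbra{i}{i}$ together with the partial trace collapses all summation indices via the orthonormality relations $\braket{b|c}=\delta_{bc}$, $\braket{k|q}=\delta_{kq}$, $\braket{p|i}=\delta_{pi}$ and $\braket{i|a}=\delta_{ai}$. Only the terms with $a=i$, $b=c$, $q=k$, $p=i$ survive, leaving $\mathbb P_\beta(i)=\sum_{c,k=0}^N\beta_{ck}\,\Tr[U_{ic}\,\rho\,U_{ik}^*]$. A single use of cyclicity of the trace rewrites this as $\Tr[A_i^\beta\rho]$ with $A_i^\beta:=\sum_{c,k=0}^N\beta_{ck}\,U_{ik}^*\,U_{ic}$, which is the announced formula; specialising to $N=1$ recovers Proposition \ref{Prop: general two level probe}.

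It then remains to check that $\mathcal A^\beta=(A_0^\beta,\dots,A_N^\beta)$ is genuinely a POVM. Completeness is immediate: summing over $i$ and exchanging the order of summation gives $\sum_{i=0}^N A_i^\beta=\sum_{c,k}\beta_{ck}\sum_{i}U_{ik}^*U_{ic}$, and $\sum_i U_{ik}^*U_{ic}$ is precisely the $(k,c)$ block of $U^*U=I_{(N+1)d}$, hence equals $\delta_{kc}I_d$; this leaves $(\sum_c\beta_{cc})I_d=(\Tr\beta)I_d=I_d$. The one point requiring genuine care, and the main (if mild) obstacle, is positivity of each $A_i^\beta$, since the off-diagonal coefficients $\beta_{ck}$ are not individually nonnegative and so $A_i^\beta$ is not manifestly a positive combination of positive operators. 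Here I would invoke $\beta\geq 0$ directly: writing a spectral decomposition $\beta=\sum_m p_m\ketbra{\phi_m}{\phi_m}$ with $p_m\geq 0$ and $\ket{\phi_m}=\sum_c\phi_{m,c}\ket{c}$, the coefficients factor as $\beta_{ck}=\sum_m p_m\,\phi_{m,c}\,\overline{\phi_{m,k}}$, and setting $V_{m,i}:=\sum_c\phi_{m,c}\,U_{ic}$ one obtains $A_i^\beta=\sum_m p_m\,V_{m,i}^*V_{m,i}\geq 0$. This confirms all the POVM axioms and completes the proof.
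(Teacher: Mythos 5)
Your proof is correct and takes essentially the same route as the paper, which performs the same block computation to obtain $\mathbb P_\beta(i)=\sum_{c,k=0}^N\beta_{ck}\,\Tr[U_{ik}^*\,U_{ic}\,\rho]$ and then declares the POVM axioms satisfied. The only difference is that the paper asserts $A_i^\beta\geq 0$ and writes $\sum_{i=0}^N A_i^\beta=\Tr\beta=1$ without detail, whereas you supply the missing justifications --- the factorization $A_i^\beta=\sum_m p_m\,V_{m,i}^*V_{m,i}$ via the spectral decomposition of $\beta$ for positivity, and the identification of $\sum_i U_{ik}^*U_{ic}$ with the $(k,c)$ block of $U^*U=I_{(N+1)d}$ for normalization --- so your write-up is a more complete version of the same argument rather than a different one.
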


\begin{proof}
An explicit computation of $\mathbb{P}_{\beta}(i)$ shows that:
\begin{align*}
    \mathbb P_{\beta}(i)&=\Tr\Big[U\,(\rho\otimes\beta)\,U^*\,(I_d\otimes\ketbra{i}{i})\Big]\\
    &=\sum_{c,k=0}^N\beta_{ck}\,\Tr\Big[U^*_{ik}\,U_{ic}\,\rho\Big]\\
    &=\Tr[A_i\rho], 
\end{align*}
where we have defined $A_i^{\beta}:=\sum_{c,k=0}^N\beta_{ck}\,U^*_{ik}\,U_{ic}\geq 0$ and $\sum_{i=0}^N A_i^{\beta}=\Tr\beta=1$, therefore $\mathcal{A}^{\beta}:=(A_0^{\beta},\cdots,A_N^{\beta})$ defines a POVM.
\end{proof}
To introduce our physical noise model in this general setting, we shall assume that the unitaries $U\in\mathbb{U}\Big(\{A_i\}_{i\in[0,N]}\Big)$ are randomly sampled from a \emph{nice probability} measure.

\begin{definition}\label{def: nice proba measure}
A probability distribution $\mu$ on $\mathbb{U}\Big(\{A_i\}_{i\in[0,N]}\Big)$ is a \emph{nice probability measure} if it is invariant under right multiplication with a unitary $Z\in\mathcal{U}((N+1)d)$ of the form $Z:=\sum_{i=0}^N\,Z_i\otimes\ketbra{i}{i}\quad\text{with}\quad Z_0=I_d\quad\text{and}\quad Z_i\in\mathcal{U}(d)$ for all $i\neq0.$
 More precisely, we have for a random unitary $U\in\mathbb{U}\Big(\{A_i\}_{i\in[0,N]}\Big)$:
$$\mathbb{E}_{U\sim\mu}[f(U)]=\mathbb{E}_{U\sim\mu}[f(U\cdot Z)].$$
\end{definition}

\begin{remark}
    To construct an example of a nice probability measure, consider a fixed element $U_0 \in \mathbb{U}(\{A_i\}_{i\in[0,N]})$. Let also $V_1, \ldots, V_N \in \mathcal U(d)$ be independent Haar-distributed random unitary matrices. Define the random variable 
    $$U:= U_0 (I_d \oplus V_1 \oplus \cdots \oplus V_N).$$
    We claim that $U$ has a nice distribution. Indeed, if $Z_1, \ldots, Z_N \in \mathcal U(d)$ are arbitrary unitary matrices,
    $$\mathbb E[f(U\cdot Z) = \mathbb E [f(U_0 V Z)] = \mathbb E [f(U_0 \tilde V)] = \mathbb E[f(U)],$$
    where we have used the fact that $V$ and $\tilde V = V Z$ have the same (block-Haar) distribution.
\end{remark} 

\begin{proposition}\label{prop: nice proba null}
    Let $\mu$ be a nice probability measure on $\mathbb{U}\Big(\{A_i\}_{i\in[0,N]}\Big)$. We have: 
    \begin{equation}
        U\in \mathbb{U}\Big(\{A_i\}_{i\in[0,N]}\Big),\quad\mathbb E_{U\sim\mu}[U^*_{ik}\,U_{i0}]=0.
    \end{equation}
\end{proposition}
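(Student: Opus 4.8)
The plan is to turn the defining invariance of a nice probability measure into a self-consistency identity that forces the matrix in question to be a fixed point of every unitary left-multiplication, and hence to vanish. This is the higher-dimensional analogue of the mechanism that killed the off-diagonal $\beta$-terms in the two-outcome case (Theorem \ref{Th: general noisy two level probe}), where there the centered $1$-design property $\mathbb{E}[ZX]=0$ did the work.

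First I would record how right multiplication by a block-diagonal unitary acts on the blocks of $U$. Writing $Z=\sum_{i=0}^N Z_i\otimes\ketbra{i}{i}$ with $Z_0=I_d$ and $Z_i\in\mathcal{U}(d)$, the matrix $Z$ is block-diagonal, so the block $(UZ)_{il}$ equals $U_{il}\,Z_l$. In particular $(UZ)_{i0}=U_{i0}$ (because $Z_0=I_d$) while $(UZ)_{ik}=U_{ik}\,Z_k$, and therefore
$$(UZ)^*_{ik}\,(UZ)_{i0}=Z_k^*\,U^*_{ik}\,U_{i0}.$$

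Next I would apply the niceness property to the (matrix-valued, entrywise) function $f(U)=U^*_{ik}\,U_{i0}$. By Definition \ref{def: nice proba measure} we have $\mathbb{E}_{U\sim\mu}[f(U)]=\mathbb{E}_{U\sim\mu}[f(UZ)]$, and since $Z_k$ is deterministic it factors out of the expectation, giving, with $M:=\mathbb{E}_{U\sim\mu}[U^*_{ik}\,U_{i0}]$,
$$M=\mathbb{E}_{U\sim\mu}\big[Z_k^*\,U^*_{ik}\,U_{i0}\big]=Z_k^*\,M.$$
This identity holds for \emph{every} unitary $Z_k\in\mathcal{U}(d)$ — here it is crucial that $k\neq 0$, so that varying $Z_k$ does not touch the frozen block $Z_0=I_d$. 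Choosing for instance $Z_k=\omega I_d$ with a phase $\omega\neq 1$ yields $\overline{\omega}\,M=M$, whence $M=0$, which is the claim.

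The argument is essentially immediate, so I do not expect a genuine obstacle; the only points requiring care are purely bookkeeping. One must track the block indices correctly when forming $(UZ)^*_{ik}(UZ)_{i0}$, and one must note that the invariance is only available for $k\neq 0$. The latter restriction is exactly what prevents a contradiction with the fact that $\mathbb{E}_{U\sim\mu}[U^*_{i0}\,U_{i0}]=A_i\neq 0$, so I would flag explicitly that the proposition is to be read with $k\neq 0$.
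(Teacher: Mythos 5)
Your proof is correct and takes essentially the same route as the paper's: both apply the right-invariance of the nice measure to the function $U\mapsto U^*_{ik}U_{i0}$, obtain $M=Z_k^*M$ for every $Z_k\in\mathcal{U}(d)$, and conclude $M=0$. Your added details --- the explicit block computation $(UZ)_{ik}=U_{ik}Z_k$, the concrete choice $Z_k=\omega I_d$ to force $M=0$, and the flag that the statement must be read with $k\neq 0$ (otherwise it would contradict $\mathbb{E}_{U\sim\mu}[U^*_{i0}U_{i0}]=A_i$, a restriction the paper leaves implicit both in the proposition and in its two-line proof) --- simply make rigorous what the paper states tersely.
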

\begin{proof}
    By using the right invariance property of the nice measure $\mu$ from Definition \ref{def: nice proba measure} we have: 
    \begin{equation*}
        \mathbb E_{U\sim\mu}[U^*_{ik}\,U_{i0}]=Z^*_k\,\mathbb E_{U\sim\mu}[U^*_{ik}\,U_{i0}].
    \end{equation*}
    Since this last equation should hold for all $Z_k\in U(d)$, the proposition holds.
\end{proof}
The physical motivation of introducing the definition above, all the microscopical degrees of freedom generating the dynamics of the total system are unknown. Therefore, defining them as random becomes relevant. However, assuming they are sampled from a nice probability measure will become clear in the following theorem. In the following, we will show that when one averages over the unitaries an effective noise model emerges.

\begin{theorem}\label{Th: Expectation value of general noisy POVM}
Let $U$ be a random unitary operator sampled from a nice probability measure $\mu$, and assume that the elements of the probe density matrix $\beta_{cc}$ are constant for all $c\neq0$. The averaged POVM given by $\mathbb E_{U\sim\mu}[\mathcal{A}^{\beta}]$ defines a noise model given by: 
 \begin{equation*}
\mathbb E_{U\sim\mu}[A^{\beta}_i]=\beta_{00}\,A_i+\frac{1-\beta_{00}}{N-1}\Big(1-\frac{1}{d}\Tr[A_i]\Big)\,I
\end{equation*}
One can interpret the element $\beta_{00}$ of the probe $\beta\in\mathcal{M}^{1,+}_{N+1}$ as a noise parameter. 
\end{theorem}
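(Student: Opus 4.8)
The plan is to begin from Proposition~\ref{prop: effective noisy general POVM}, which expresses each effect as $A_i^{\beta}=\sum_{c,k=0}^N\beta_{ck}\,U^*_{ik}\,U_{ic}$, take the expectation over $U\sim\mu$, and use linearity to reduce the whole statement to understanding the block second moments
\[
M^{(i)}_{kc}:=\mathbb{E}_{U\sim\mu}\big[U^*_{ik}\,U_{ic}\big].
\]
Everything hinges on one elementary observation about the nice measure (Definition~\ref{def: nice proba measure}): right multiplication by $Z=\sum_{j=0}^N Z_j\otimes\ketbra{j}{j}$ with $Z_0=I_d$ acts blockwise as $U_{ic}\mapsto U_{ic}Z_c$, so the invariance $\mathbb{E}[f(U)]=\mathbb{E}[f(UZ)]$ applied to $f(U)=U^*_{ik}U_{ic}$, after pulling out the deterministic $Z_k,Z_c$, gives the master identity $M^{(i)}_{kc}=Z_k^*\,M^{(i)}_{kc}\,Z_c$, valid for all $Z_k,Z_c\in\mathcal{U}(d)$.

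I would then classify $M^{(i)}_{kc}$ by which indices vanish. For $k=c=0$ the defining relation of $\mathbb{U}(\{A_i\})$ gives $M^{(i)}_{00}=A_i$. The terms with exactly one index zero vanish by Proposition~\ref{prop: nice proba null} together with its adjoint, so $M^{(i)}_{0c}=M^{(i)}_{k0}=0$ for $k,c\neq0$. For $k\neq c$ both nonzero, inserting $Z_k=I_d$, $Z_c=-I_d$ into the master identity yields $M^{(i)}_{kc}=-M^{(i)}_{kc}$, hence $M^{(i)}_{kc}=0$. Finally, for $k=c\neq0$ the identity reads $M^{(i)}_{cc}=Z_c^*M^{(i)}_{cc}Z_c$ for every $Z_c\in\mathcal{U}(d)$, so Schur's lemma forces $M^{(i)}_{cc}=\lambda_{ic}I_d$ with $\lambda_{ic}=\tfrac1d\,\mathbb{E}\big[\Tr(U^*_{ic}U_{ic})\big]$. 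Only the diagonal survives, leaving
\[
\mathbb{E}_{U\sim\mu}[A_i^{\beta}]=\beta_{00}\,A_i+\sum_{c=1}^N\beta_{cc}\,\lambda_{ic}\,I_d.
\]

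It remains to evaluate the scalar part, and the constant-diagonal hypothesis is precisely what makes this painless: writing the common value $\beta_{cc}=b$ for $c\neq0$, I can factor $b$ out and only need $\sum_{c=1}^N\lambda_{ic}$, not the individual terms. Unitarity supplies this sum: the $(i,i)$ block of $UU^*=I$ reads $\sum_{c=0}^N U_{ic}U^*_{ic}=I_d$, so taking traces gives $\sum_{c=0}^N\Tr[U^*_{ic}U_{ic}]=d$; since $\Tr[U^*_{i0}U_{i0}]=\Tr A_i$, this leaves $\sum_{c=1}^N\lambda_{ic}=1-\tfrac1d\Tr A_i$. The condition $\Tr\beta=1$ fixes $b=(1-\beta_{00})/N$, and substituting gives $\mathbb{E}[A_i^{\beta}]=\beta_{00}A_i+b\big(1-\tfrac1d\Tr A_i\big)I$, the claimed noise model; specializing to $N=1$ reproduces the two-outcome formula of Theorem~\ref{Th: general noisy two level probe}, which is a reassuring consistency check on the normalization.

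The step I expect to be delicate is the classification of the $M^{(i)}_{kc}$: it is here that the full force of ``niceness'' is used, namely invariance under \emph{independently} chosen $Z_c$ on each probe level rather than a mere $1$-design condition, and one must keep in mind that Proposition~\ref{prop: nice proba null} only disposes of the cases where a single index equals $0$, so the $k\neq c\neq0$ and the $k=c\neq0$ subcases genuinely require the separate invariance and Schur arguments above. Once those vanishings are secured, the trace bookkeeping and the use of $\Tr\beta=1$ to collapse the sum are entirely routine.
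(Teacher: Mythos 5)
Your argument follows the paper's own proof essentially step for step: expand $\mathbb{E}_{U\sim\mu}[A_i^{\beta}]$ via Proposition \ref{prop: effective noisy general POVM}, use the right-invariance of the nice measure to obtain $\mathbb{E}[U^*_{ik}U_{ic}]=Z_k^*\,\mathbb{E}[U^*_{ik}U_{ic}]\,Z_c$, kill the cross terms, reduce the diagonal second moments to scalars $\Gamma_{ic}=\frac{1}{d}\Tr\mathbb{E}[U^*_{ic}U_{ic}]$, and use the $(i,i)$ block of $UU^*=I_{(N+1)d}$ to get $\sum_{c\neq 0}\Gamma_{ic}=1-\frac{1}{d}\Tr A_i$. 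In fact you are more explicit than the paper at the one genuinely delicate point: the paper simply asserts $\mathbb{E}[U^*_{ik}U_{ic}]=\delta_{ck}\Gamma_{ic}I$ for $c,k\neq0$ under a brace, whereas you justify both subcases (the phase trick $Z_c=-I_d$ for $k\neq c$, and Schur's lemma for $k=c\neq0$), correctly noting that Proposition \ref{prop: nice proba null} only disposes of the one-index-zero terms.

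The one place you diverge from the paper is the final constant, and there you are right and the statement as printed is not. Since $\beta\in\mathcal{M}_{N+1}^{1,+}$ has $N$ diagonal entries with $c\neq0$, the constant-diagonal hypothesis together with $\Tr\beta=1$ forces $\beta_{cc}=(1-\beta_{00})/N$, not $(1-\beta_{00})/(N-1)$ as substituted in the paper's proof — an off-by-one slip. Your sanity checks confirm this: with $1/N$ the averaged effects sum to $I$, whereas with $1/(N-1)$ they sum to $\beta_{00}I+\frac{N}{N-1}(1-\beta_{00})I\neq I$ (and, relatedly, the trivial POVM $t_i=\frac{1}{N-1}\big(1-\frac{1}{d}\Tr A_i\big)$ quoted in Section \ref{sec: examples} sums to $\frac{N}{N-1}$, not $1$); moreover at $N=1$ your formula reproduces Theorem \ref{Th: general noisy two level probe} exactly, while the paper's denominator $N-1$ degenerates. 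So your proposal is a correct proof, by the paper's own method, of the corrected statement with $\frac{1-\beta_{00}}{N}$ in place of $\frac{1-\beta_{00}}{N-1}$.
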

\begin{proof}
Let $\mu$ a nice probability measure on $\mathbb{U}\Big(\{A_i\}_{i\in[0,N]}\Big)$, we recall from Proposition \ref{prop: effective noisy general POVM} that:  $$A_i^{\beta}=\sum_{c,k=0}^N\beta_{ck}\,U^*_{ik}\,U_{ic}.$$
An explicit computation of the expectation value of $\mathcal{A}^{\beta}$ on $U\sim\mu$ yields to:
\begin{align*}
    \mathbb{E}_{U\sim\mu}[A_i^{\beta}]&=\sum_{c,k=0}^N\beta_{ck}\,\mathbb E_{U\sim\mu}[U^*_{id}\,U_{ic}]\\
    &=\beta_{00}\,A_i+\sum_{k=0}^N\beta_{0k}\,\mathbb E_{U\sim\mu}[U^*_{ik}\,U_{i0}]+\sum_{c=0}^N\beta_{c0}\,\mathbb E_{U\sim\mu}[U^*_{i0}\,U_{ic}]+\sum_{c,k\neq0}^N\beta_{ck}\,\mathbb E_{U\sim\mu}[U^*_{ik}\,U_{ic}]\\
    &=\beta_{00}\,A_i+\sum_{k=0}^N\beta_{0k}\,Z_k^*\,\mathbb E_{U\sim\mu}[U^*_{ik}\,U_{i0}]+\sum_{c=0}^N\beta_{c0}\,\mathbb E_{U\sim\mu}[U^*_{i0}\,U_{ic}]\,Z_c+\sum_{c,k\neq0}^N\beta_{ck}\,Z^*_k\mathbb E_{U\sim\mu}[U^*_{ik}\,U_{ic}]\,Z_c\\
    &=\beta_{00}\,A_i+\sum_{k=0}^N\beta_{0k}\,Z_k^*\,\underbrace{\mathbb E_{U\sim\mu}[U^*_{ik}\,U_{i0}]}_{0}+\sum_{c=0}^N\beta_{c0}\,\underbrace{\mathbb E_{U\sim\mu}[U^*_{i0}\,U_{ic}]}_{0}\,Z_c+\sum_{c,k\neq0}^N\beta_{ck}\,Z^*_k\underbrace{\mathbb E_{U\sim\mu}[U^*_{ik}\,U_{ic}]}_{\delta_{ck}\,\Gamma_{ic}\,I}\,Z_c.
    \end{align*}
In the last line, the simplification is obtained by the use of Proposition \ref{prop: nice proba null}, therefore we have the final expression given by: 
    \begin{equation}
\mathbb{E}_{U\sim\mu}[A_i^{\beta}]=\beta_{00}\, A_i+\sum_{c\neq0}^N\beta_{cc}\,\Gamma_{ic}\,I.
\end{equation}
where we have defined the constant $\Gamma_{ic}$ as:
\begin{equation}\label{eq: Gamma_ic}
\Gamma_{ic}:=\frac{1}{d}\Tr\Big[\mathbb E_{U\sim\mu}[U^*_{ic}\,U_{ic}]\Big].
\end{equation}
One can check with the definition of $\Gamma_{ic}$ that $\mathbb{E}_{U\sim\mu}[A_i^{\beta}]=I$ is a POVM, from the expression of $\mathbb{E}_{U\sim\mu}[A_i^{\beta}]$ the positivity condition holds, the only remaining condition is to show that the $\sum_i \mathbb{E}_{U\sim\mu}[A_i^{\beta}]$.

By using that $U^*U=I_{(N+1)d}$ we have:
\begin{equation}\label{eq: sum_i gammaic}
U^*U=I_{(N+1)d}\implies\sum_{i=0}^NU^*_{ic}\,U_{ic}=I_d\implies\sum_{i=0}^N\Gamma_{ic}=1.\end{equation}
Then we have from equation \eqref{eq: sum_i gammaic} that: 
\begin{equation}
   \sum_{i=0}^N \mathbb{E}_{U\sim\mu}[A_i^{\beta}]=\beta_{00}\, \sum_{i=0}^N\,A_i+\sum_{c\neq0}^N\beta_{cc}\sum_{i=0}^N\,\Gamma_{ic}\,I=\Tr[\beta]=1.
\end{equation}

By assuming for all $\forall c\neq0$, $\beta_{cc}$ are constants and do not depend on $c$, we have that:
\begin{align*}
    \mathbb{E}_{U\sim\mu}[A_i^{\beta}]&=\beta_{00}\, A_i+\sum_{c\neq0}^N\beta_{cc}\,\Gamma_{ic}\,I\\
    &=\beta_{00}\,A_i+\frac{(1-\beta_{00})}{N-1}\sum_{c\neq0}^N\,\Gamma_{ic}\,I\\
    &=\beta_{00}\,A_i+\frac{(1-\beta_{00})}{N-1}\Big(1-\frac{1}{d}\Tr[A_i]\Big)I.
\end{align*}
To obtain the last equality we have used the unitary property of $U$ where we have:
\begin{equation}
UU^*=I_{(N+1)d}\implies\sum_{c=0}^NU_{ic}U^*_{ic}=I_d\implies\sum_{c=0}^N\Gamma_{ic}=1,
\end{equation}
and by combining the expression from equation \eqref{eq: Gamma_ic},we deduce that:
\begin{equation}
\sum_{c\neq0}^N\Gamma_{ic}=1-\Gamma_{i0}=1-\frac{1}{d}\Tr\Big[\mathbb E_{U\sim\mu}[U^*_{i0}\,U_{i0}]\Big]=1-\frac{1}{d}\Tr[A_i].
\end{equation}
\end{proof}

\begin{remark}
    An interesting open question is to explore the possible noise models that can be obtained by considering different probe states $\beta$ (with non-constant diagonal) and / or different probability distributions for interaction unitary $U$. Could the noise models from \cite{designolle2019incompatibility} be obtained in this way? 
\end{remark}

\section{Applications to compatibility}\label{sec: examples}

In this section, we shall give an application of the physical noise model we introduced in the setting of compatibility of quantum measurements. As we have described in Section \ref{sec:noisy-POVMs}, one can always make incompatible measurements compatible by adding noise. A natural notion that we shall encounter in this section is the notion of \emph{robustness}. Robustness captures the minimal amount of noise needed to make incompatible measurements compatible. We shall give the formulation of robustness in the case of three different noise models, the uniform noise model, the depolarizing noise model, and our physical noise model. We shall give the SDP and its dual formulation in the three types of noise models mentioned above. We will end this section by giving an explicit example of robustness in the case of a particular pair of measurements, using the three noise models.

In general, deciding whether a set of quantum measurements is (in-)compatible is a hard task, which can be formulated as a semidefinite program (SDP). These are a type of convex optimization programs with positive semidefinite constraints (see \cite{boyd2004convex} for a general introduction) that will allow us later to give a formulation of robustness.

 We recall that a noise model is given by a convex combination of a given POVM $\mathcal{A}$ with a trivial POVM $\mathcal{T}$. We recall from Section \ref{sec:noisy-POVMs} the uniform noise and the depolarizing noise are given by:
 \begin{itemize}
     \item The uniform noise model for $\alpha\in[0,1]$ is defined by: 
     \begin{equation*}         \mathcal{A}\to\mathcal{A}^{\alpha}:=\alpha \mathcal{A}+(1-\alpha)\mathcal{T}, 
     \end{equation*}
     where $\mathcal{T}:=(t_1 I,\cdots,t_N I)$, with $t_i:=1/N$ for all $i\in[N]$.
     \vspace{5mm}
     \item The depolarizing noise for $\alpha\in[0,1]$ is defined by:
     \begin{equation*}         \mathcal{A}\to\mathcal{A}^{\alpha}:=\alpha \mathcal{A}+(1-\alpha)\mathcal{T}, 
     \end{equation*}
     where $\mathcal{T}:=(t_1 I,\cdots,t_N I)$, with $t_i:=\Tr A_i/d$ for all $i\in[N]$.
     \vspace{5mm}
     \end{itemize}
We shall also recall that our main noise model is based on an indirect measurement process obtained in Theorem \ref{Th: Expectation value of general noisy POVM} where:
     \begin{itemize}
         \item For a given probe $\beta$ where $\beta_{00}$ can be interpreted as a noise parameter taking values in $[0,1]$, \begin{equation*}
         \mathcal{A}\to\mathcal{A}^{\beta}=\beta_{00}\mathcal{A}+(1-\beta_{00})\mathcal{T},
     \end{equation*}
     and the trivial POVM is given by $\mathcal{T}=(t_0I,\cdots,t_N I)$, with $t_i=\frac{1}{N-1}\Big(1-\frac{1}{d}\Tr[A_i]\Big)$ for all $i\in[0,N]$. 
 \end{itemize}
 Note that in the formulas above, it is the quantity $1-\alpha$, resp.~$1-\beta_{00}$, which measures the amount of noise present in the POVM $\mathcal A^\alpha$. In the following, we shall give the definition of \emph{incompatibility robustness} that computes the minimal amount of noise needed to make two POVMs compatible, see \cite{heinosaari2016invitation,designolle2019incompatibility} and the references therein. 
\begin{definition}\label{def: robustness}
    For two POVMs $\mathcal{A}$ and $\mathcal{B}$, and a noise model $\mathcal{T}$, the robustness of incompatibility of $\mathcal{A}$ and $\mathcal{B}$ is defined by: 
    \begin{equation}\label{eq:incompatibility-robustness}
     \alpha^*:=\sup_{\alpha\in[0,1]}\{\alpha\, :\, \mathcal{A}^{\alpha} \text{ and }\mathcal{B}^{\alpha}\text{ are compatible}\}.
     \end{equation} 
\end{definition}
For a noise model given by a trivial operator $\mathcal{T}$, we can give the SDP formulation of robustness. Therefore, we will recall the SDP and its dual formulation for the uniform and depolarizing noise. 
\begin{definition}\label{def: rob uniform depol}
    The SDP of the robustness of two POVMs with $N$ outcomes $\mathcal{A}$ and $\mathcal{B}$ with a uniform and depolarizing noise are given respectively by: 
    \begin{equation*}\label{eq: primal op uniform and dep}
\alpha^*_{u}:=\begin{cases}\max_{\alpha\in[0,1],\{C_{ij}\}}\alpha\quad\text{s.t}&\\
        \sum_{j=1}^NC_{ij}=\alpha A_i+(1-\alpha)I/N&\\
        \sum_{i=1}^NC_{ij}=\alpha B_j+(1-\alpha)I/N&  
        \end{cases}
        \quad\text{and}\quad 
\alpha^*_{d}:=\begin{cases}\max_{\alpha\in[0,1],\{C_{ij}\}}\alpha\quad\text{s.t} &\\
        \sum_{j=1}^NC_{ij}=\alpha A_i+(1-\alpha)\Tr A_i/d&\\
        \sum_{i=1}^NC_{ij}=\alpha B_j+(1-\alpha)\Tr B_j/d&  
        \end{cases},
    \end{equation*}
 where the $\alpha^*_u$ and $\alpha^*_d$ stand respectively for the case of the uniform and the depolarizing noise.
\end{definition}
\begin{proposition}\cite{designolle2019incompatibility}\label{prop: dual rob uniform depol}
    The dual SDP of the robustness of the case of the uniform and the depolarising noise for two POVMs with $N$ outcomes are given respectively by: 
    \begin{align*}
        \alpha^*_u&=\begin{cases}
            \min_{\{X_i\}_{i\in[N]},\{Y_j\}_{j\in[N]}} 1+\sum_{i=1}^N \Tr[X_iA_i]+\sum_{j=1}^N\Tr[Y_jB_j]\quad\text{s.t}&\\
            X_i+Y_j\geq 0\,\text{for all}\quad i,j\in[N]\quad\text{and}&\\
            1+\sum_{i=1}^N \Tr[X_iA_i]+\sum_{j=1}^N\Tr[Y_jB_j]\geq \sum_{i=1}^N\Tr X_i/N + \sum_{j=1}^N\Tr Y_j/N
        \end{cases}\\
        \alpha^*_d&=\begin{cases}
            \min_{\{X_i\}_{i\in[N]},\{Y_j\}_{j\in[N]}} 1+\sum_{i=1}^N \Tr[X_iA_i]+\sum_{j=1}^N\Tr[Y_jB_j]\quad\text{s.t}&\\
            X_i+Y_j\geq 0\,\text{for all}\quad i,j\in[N]\quad\text{and}&\\
            1+\sum_{i=1}^N \Tr[X_iA_i]+\sum_{j=1}^N\Tr[Y_jB_j]\geq \sum_{i=1}^N\Tr X_i\,\Tr X_i/d + \sum_{j=1}^N\Tr Y_j\,\Tr B_j/d
        \end{cases}
    \end{align*}
\end{proposition}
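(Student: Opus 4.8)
The plan is to derive both dual programs at once from the primal SDPs of Definition~\ref{def: rob uniform depol} by Lagrangian (conic) duality, exploiting that the uniform and depolarizing cases differ only through the weights of the trivial POVM: $t_i=1/N$ in one case and $t_i=\Tr A_i/d$ in the other. First I would write the primal as the maximization of $\alpha$ over the Hermitian variables $\{C_{ij}\}_{i,j\in[N]}$, subject to the cone constraint $C_{ij}\geq0$, the two families of marginal equalities $\sum_j C_{ij}=\alpha A_i+(1-\alpha)t_i I$ and $\sum_i C_{ij}=\alpha B_j+(1-\alpha)t_j I$, and the scalar bound $\alpha\leq1$. The lower bound $\alpha\geq0$ is always feasible (trivial POVMs are compatible), hence inactive, and need not be dualized.

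Next I would attach Hermitian multipliers $X_i$ and $Y_j$ to the two families of marginal equalities and a scalar multiplier $\mu\geq0$ to $\alpha\leq1$, and form the Lagrangian. Maximizing it over the primal variables produces the dual constraints. Because each $C_{ij}$ ranges over the positive semidefinite cone, its coefficient in the Lagrangian is $-(X_i+Y_j)$, so finiteness of the supremum forces $X_i+Y_j\geq0$. Because $\alpha$ is otherwise free, the coefficient of $\alpha$ must vanish, which solves for $\mu=1+\sum_i\Tr[X_iA_i]+\sum_j\Tr[Y_jB_j]-\sum_i t_i\Tr X_i-\sum_j t_j\Tr Y_j$. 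Substituting this value back into the remaining constant terms of the dual function collapses them, so the dual objective becomes exactly $1+\sum_i\Tr[X_iA_i]+\sum_j\Tr[Y_jB_j]$, while the sign condition $\mu\geq0$ turns into the scalar inequality $1+\sum_i\Tr[X_iA_i]+\sum_j\Tr[Y_jB_j]\geq\sum_i t_i\Tr X_i+\sum_j t_j\Tr Y_j$. Specializing $t_i=1/N$ and $t_i=\Tr A_i/d$ then gives the two stated programs verbatim.

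Finally I would certify that the dual value equals the primal value by strong duality via Slater's condition. At $\alpha=0$ the marginal equalities reduce to $\sum_j C_{ij}=t_iI$ and $\sum_i C_{ij}=t_jI$, which are solved by the strictly positive definite product point $C_{ij}=t_it_jI\succ0$; here one uses $\sum_i t_i=\sum_j t_j=1$, valid for both weight choices since $\sum_i\Tr A_i=\Tr I_d=d$. This point also satisfies $\alpha<1$ strictly, so an interior feasible point exists and strong duality holds (in the depolarizing case one assumes $\Tr A_i,\Tr B_j>0$, the degenerate outcomes being discarded).

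The main obstacle I anticipate is the careful bookkeeping of the bound $\alpha\leq1$ together with the sign conventions. It is precisely the multiplier $\mu\geq0$ of this bound that converts the ``coefficient of $\alpha$ vanishes'' identity into the scalar \emph{inequality} of the dual rather than an equality, and recovering the correct orientation $X_i+Y_j\geq0$ requires tracking that $C_{ij}$ lives in the PSD cone while carrying no weight in the objective. Everything past that point is routine linear algebra.
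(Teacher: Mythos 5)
Your proposal is correct and follows essentially the same route as the paper: Lagrangian duality applied to the primal SDPs of Definition~\ref{def: rob uniform depol}, with multipliers $X_i,Y_j$ for the marginal equalities, the constraint $X_i+Y_j\geq 0$ emerging from the positivity of $C_{ij}$, and Slater's condition ensuring strong duality --- your explicit multiplier $\mu\geq 0$ for the bound $\alpha\leq 1$ is just a cleaner bookkeeping of the step the paper carries out by maximizing the Lagrangian over $\alpha$ directly. One remark: your derivation produces the term $\sum_{i=1}^N \Tr A_i\,\Tr X_i/d$ in the depolarizing constraint, which confirms that the factor $\Tr X_i\,\Tr X_i/d$ in the printed statement is a typo for $\Tr A_i\,\Tr X_i/d$ (consistent with the Lagrangian $\mathcal{L}_d$ in the paper's own proof).
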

\begin{proof}
Let us consider the following Lagrangian $\mathcal{L}_u$ corresponding to the optimization problem of robustness for the uniform noise given by the primal SDP in Eq.~\eqref{eq: primal op uniform and dep}: 
\begin{equation*}
\mathcal{L}_u:=\alpha+\sum_{i,j=1}^N\Big\langle C_{ij},X_{ij}\Big\rangle-\sum_{i=1}^N\Big\langle\sum_{j=1}^NC_{ij}-\alpha A_i-(1-\alpha)\frac{I}{N},X_i\Big\rangle-\sum_{j=1}^N\Big\langle\sum_{i=1}^NC_{ij}-\alpha B_j-(1-\alpha)\frac{I}{N},Y_j\Big\rangle
\end{equation*}
where $X_{ij}$, $X_i$ and $Y_j$ are positive semidefinite matrices for all $i,j\in[N]$ encoding the different constrains of the primal optimisation problem. 

Since Slater's condition is satisfied, the minimax duality holds, therefore we have: 
\begin{equation}
    \max_{\alpha,\{C_{ij}\}}\quad\inf_{\{X_{ij}\},\{X_i\},\{Y_j\}}\mathcal{L}_u=\inf_{\{X_{ij}\},\{X_i\},\{Y_j\}}\quad\max_{\alpha,\{C_{ij}\}}\mathcal{L}_u,
\end{equation}
By expanding the expression of $\mathcal{L}_u$ and taking the maximum over $\{C_{ij}\}$ and $\alpha$ we obtain: 
\begin{equation*}
    \max_{\{C_{ij}\},\alpha}\mathcal{L}_u=
        \sum_{i=1}^N\Tr X_i/N+\sum_{j=1}^N\Tr Y_j/N
\end{equation*}
if the following constraints hold
\begin{align*}
     1+\sum_{j=1}^N\Big\langle B_j,Y_j\Big\rangle+\sum_{i=1}^N\Big\langle A_i,X_i\Big\rangle&\geq\sum_{i=1}^N\Tr X_i/N+\sum_{j=1}^N\Tr Y_j/N\quad\text{and}\\
        X_{ij}&=X_i+Y_j\geq0,
\end{align*}
and $+\infty$ if they do not. Above we have used the Hilbert-Schmidt scalar $\langle A,B\rangle_{\text{H.S}}=\Tr (A^*B)$. 
By plugging the obtained value in the expression of $\max_{\{C_{ij}\},\alpha}\mathcal{L}_u$ we obtain the result of the statement. 

The same computation can be explicitly done for the depolarising noise, where it is easy to check by duality the results of the statement in the proposition with the following Lagrangian:   
\begin{align*}
\mathcal{L}_d:=\alpha+\sum_{i,j=1}^N\Big\langle C_{ij},X_{ij}\Big\rangle&-\sum_{i=1}^N\Big\langle\sum_{j=1}^NC_{ij}-\alpha A_i-(1-\alpha)\frac{\Tr A_i}{d},X_i\Big\rangle\\
&-\sum_{j=1}^N\Big\langle\sum_{i=1}^NC_{ij}-\alpha B_j-(1-\alpha)\frac{\Tr B_j}{N},Y_j\Big\rangle.
\end{align*}
\end{proof}
In the following, we shall give the SDP formulation of robustness in the case of our physical noise model and its dual formulation.
\begin{definition}\label{def: SDP rob physical model}
    The SDP of the robustness of two POVMs with $N+1$ outcomes $\mathcal{A}$ and $\mathcal{B}$ in the physical noise model is given by: 
    \begin{equation*}
        \alpha^*_p:=\begin{cases}
            \max_{\beta_{00}\in[0,1],\{C_{ij}\}} \beta_{00}\quad\text{s.t}&\\
            \sum_{j=0}^N C_{ij}=\beta_{00} A_i+\frac{(1-\beta_{00})}{N-1}\Big(1-\frac{\Tr A_i}{d}\Big)I&\\
            \sum_{i=0}^N C_{ij}=\beta_{00} B_j+\frac{(1-\beta_{00})}{N-1}\Big(1-\frac{\Tr B_j}{d}\Big)I&
        \end{cases},
    \end{equation*}
    where $\alpha^*_p$ stands for robustness in our physical model. 
\end{definition}

We give next the dual formulation. 

\begin{proposition}\label{prop: dual SDP rob physical model}
    The dual SDP of robustness in the case of our physical noise model is given by: 
    \begin{equation*}
    \alpha^*_p=
            \min_{\{X_i\}_{i\in[0,N]},\{Y_j\}_{j\in[0,N]}} 1+\sum_{i=0}^N\Tr[X_iA_i]+\sum_{j=0}^N\Tr[Y_jB_j]
            \end{equation*}
            with the following constraints: 
            
            \begin{align*}
            X_i+Y_j&\geq 0\,\text{for all}\quad i,j\in[0,N]\,\text{and}\\
            1+\sum_{i=0}^N \langle X_iA_i\rangle+\sum_{j=0}^N\Tr[Y_jB_j]&\geq \frac{1}{N-1}\Big(\sum_{i=0}^N\Tr X_i +            \sum_{j=0}^N\Tr Y_j-\frac{1}{d}\sum_{j=0}^N\Tr B_j\,\Tr Y_j-\frac{1}{d}\sum_{i=0}^N\Tr X_i\,\Tr A_i\Big).
        \end{align*}        
\end{proposition}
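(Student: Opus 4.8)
The plan is to derive the dual by Lagrangian duality, following the very scheme already used for the uniform and depolarizing models in Proposition \ref{prop: dual rob uniform depol}, the only change being that the trivial weights are now $t_i=\frac{1}{N-1}\big(1-\frac{\Tr A_i}{d}\big)$. First I would attach a positive semidefinite multiplier $X_{ij}$ to each constraint $C_{ij}\ge 0$ and Hermitian multipliers $X_i,Y_j$ to the two families of marginal equalities of Definition \ref{def: SDP rob physical model}, and assemble the Lagrangian
\begin{align*}
\mathcal L_p := \beta_{00}+\sum_{i,j=0}^N\langle C_{ij},X_{ij}\rangle
&-\sum_{i=0}^N\Big\langle \sum_{j=0}^N C_{ij}-\beta_{00}A_i-\tfrac{1-\beta_{00}}{N-1}\big(1-\tfrac{\Tr A_i}{d}\big)I,\,X_i\Big\rangle\\
&-\sum_{j=0}^N\Big\langle \sum_{i=0}^N C_{ij}-\beta_{00}B_j-\tfrac{1-\beta_{00}}{N-1}\big(1-\tfrac{\Tr B_j}{d}\big)I,\,Y_j\Big\rangle,
\end{align*}
with the Hilbert--Schmidt pairing $\langle A,B\rangle=\Tr(A^*B)$. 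Assuming strict feasibility of the primal — which holds here exactly as in the uniform and depolarizing cases, so that Slater's condition applies — strong duality gives the minimax identity $\max_{\beta_{00},\{C_{ij}\}}\inf_{\{X_{ij}\},\{X_i\},\{Y_j\}}\mathcal L_p=\inf\max\mathcal L_p$, which is what licenses passing to the dual.

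Next I would maximize $\mathcal L_p$ over the primal variables. Since $\mathcal L_p$ is affine in each $C_{ij}$, the coefficient $X_{ij}-X_i-Y_j$ must vanish for the inner maximum to be finite; together with $X_{ij}\ge 0$ this produces the first dual constraint $X_i+Y_j\ge 0$ and eliminates the $X_{ij}$. What remains is affine in $\beta_{00}$, of the form $c_1\beta_{00}+c_0$, and I would collect the two coefficients using $\langle I,X_i\rangle=\Tr X_i$. The slope works out to $c_1=1+\sum_i\Tr[X_iA_i]+\sum_j\Tr[Y_jB_j]-\frac{1}{N-1}\big(\sum_i\Tr X_i+\sum_j\Tr Y_j-\frac1d\sum_i\Tr A_i\,\Tr X_i-\frac1d\sum_j\Tr B_j\,\Tr Y_j\big)$, while the constant $c_0$ equals precisely the quantity subtracted in $c_1$. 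Maximizing the affine map over $\beta_{00}\in[0,1]$ is attained at $\beta_{00}=1$ exactly when $c_1\ge 0$, with value $c_1+c_0=1+\sum_i\Tr[X_iA_i]+\sum_j\Tr[Y_jB_j]$; the admissibility requirement $c_1\ge 0$ is, after moving the $\frac{1}{N-1}(\cdots)$ term to the right-hand side, exactly the second dual constraint in the statement. This reproduces both the objective and the constraints claimed.

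The algebra is routine; the only place demanding care — and the main obstacle — is the bookkeeping of the $\beta_{00}$-coefficient, where the depolarizing-type weight $1-\frac{\Tr A_i}{d}$ splits into a plain trace term $\Tr X_i$ and a cross term $\frac1d\Tr A_i\,\Tr X_i$ (and likewise for $B$), so these must be tracked separately to land on the stated inequality rather than on the uniform-model one. A secondary point worth a clean argument is the reduction of $\max_{\beta_{00}\in[0,1]}(c_1\beta_{00}+c_0)$: one must check that restricting to the half-space $c_1\ge 0$ (where the maximizer sits at $\beta_{00}=1$) does not alter the infimum of the dual, i.e.\ that no feasible dual point with $c_1<0$ is optimal, so that presenting the dual as minimization of $1+\sum_i\Tr[X_iA_i]+\sum_j\Tr[Y_jB_j]$ subject to $X_i+Y_j\ge0$ and the inequality $c_1\ge0$ is legitimate. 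As a final sanity check I would verify weak duality directly by pairing a primal feasible $(\beta_{00},\{C_{ij}\})$ against a dual feasible $(\{X_i\},\{Y_j\})$, confirming the signs in the two constraints.
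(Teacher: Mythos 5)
Your proposal is correct and takes essentially the same route as the paper's own proof: the identical Lagrangian $\mathcal{L}_p$, Slater's condition to justify the minimax exchange, elimination of the $C_{ij}$ to obtain $X_i+Y_j\geq 0$, and tracking of the affine dependence on $\beta_{00}$ whose slope condition becomes the second dual constraint. If anything you are more careful than the paper, which silently performs the restriction to the half-space where the $\beta_{00}$-coefficient is nonnegative (the point you flag as deserving a scaling argument) before substituting the optimal value into the objective.
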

\begin{proof}
Let us define the Lagrangian $\mathcal{L}_p$ corresponding to the optimization problem of robustness in our main physical model given by:
    \begin{align*}
    \mathcal{L}_p:=\beta_{00}+&\sum_{i,j=0}^N\Big\langle C_{ij},X_{ij}\Big\rangle-\sum_{i=0}^N\Big\langle \sum_{j=0}^N C_{ij}-\beta_{00} A_i-\frac{(1-\beta_{00})}{N-1}\Big(1-\frac{\Tr A_i}{d}\Big)I,X_i\Big\rangle\\
    &-\sum_{j=0}^N\Big\langle \sum_{i=0}^N C_{ij}-\beta_{00} B_j-\frac{(1-\beta_{00})}{N-1}\Big(1-\frac{\Tr B_j}{d}\Big)I,Y_j\Big\rangle.
    \end{align*}
Due to Slater's condition which is satisfied, the minimax duality holds, therefore we have: 
\begin{equation*}
    \max_{\beta_{00},\{C_{ij}\}}\quad\inf_{\{X_{ij}\},\{X_i\},\{Y_j\}}\mathcal{L}_p=\inf_{\{X_{ij}\},\{X_i\},\{Y_j\}}\quad\max_{\beta_{00},\{C_{ij}\}}\mathcal{L}_p.
\end{equation*}
By expanding the expression of $\mathcal{L}_p$ and taking the maximum of $\{C_{ij}\}$ and $\beta_{00}$:
\begin{equation*}
    \max_{\{C_{ij}\},\beta_{00}}\mathcal{L}_p= \frac{1}{N-1}\Big(\sum_{i=0}^N\Tr X_i +            \sum_{j=0}^N\Tr Y_j-\frac{1}{d}\sum_{j=0}^N\Tr B_j\,\Tr Y_j-\frac{1}{d}\sum_{i=0}^N\Tr X_i\,\Tr A_i\Big),
\end{equation*}
with the following constraints: 
\begin{equation*}
1+\sum_{i=0}^N \langle X_iA_i\rangle+\sum_{j=0}^N\Tr[Y_jB_j]\geq \frac{1}{N-1}\Big(\sum_{i=0}^N\Tr X_i +            \sum_{j=0}^N\Tr Y_j-\frac{1}{d}\sum_{j=0}^N\Tr B_j\,\Tr Y_j-\frac{1}{d}\sum_{i=0}^N\Tr X_i\,\Tr A_i\Big),
\end{equation*}
and \begin{equation*}    X_{ij}=X_i+Y_j\geq 0\,\text{for all}\quad i,j\in[0,N],
\end{equation*}
and $+\infty$ if the constraints above are not satisfied. 

By plugging the obtained value in the expression of $\max_{\{C_{ij}\},\alpha}\mathcal{L}_p$ we obtain the result of the statement. 
\end{proof}

\bigskip

Let us now compare the incompatibility robustness of a pair of measurements using the three different noise models: uniform, depolarizing, and our physical model. We consider an example of two POVMs on $\mathbb C^3$ with two outcomes $\mathcal{A}=(A_0,A_1)$ and $\mathcal{B}=(B_0,B_1)$ given by: 
\begin{figure}[httb]
    \centering
\includegraphics[width=.47\textwidth]{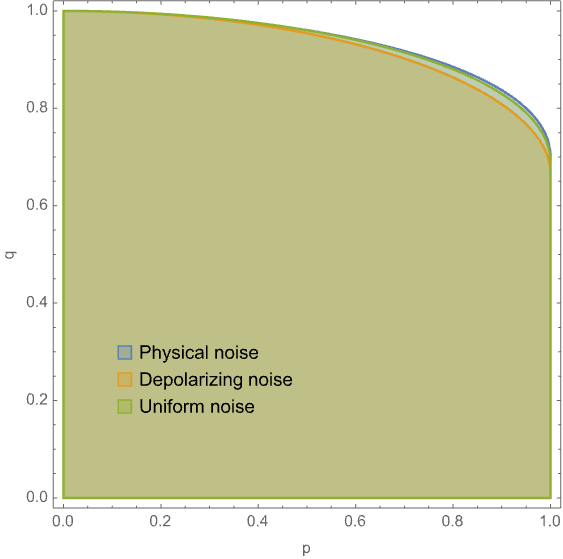}\quad
\includegraphics[width=.47\textwidth]{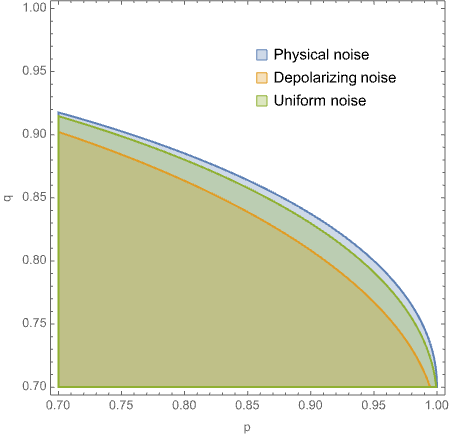}
    \caption{Compatibility region of the POVMs $\mathcal A$, $\mathcal B$ for the different noise models: the uniform, the depolarizing, and our physical noise model. On the right panel, we zoom in on the parameter region where the noise models yield different results.}
    \label{fig: compatibility region, uni dep and our noise}
\end{figure}

\begin{equation*}
    A_0:=\begin{bmatrix}
1/3&0&0\\
0&2/3&0\\
0&0&0\end{bmatrix},\,
    A_1:=\begin{bmatrix}
2/3&0&0\\
0&1/3&0\\
0&0&1\end{bmatrix},
\end{equation*}
and \begin{equation*}
    B_0=\ketbra{f_1}{f_1}+\ketbra{f_2}{f_2}\quad\text{and}\quad B_1=\ketbra{f_3}{f_3},
\end{equation*}
where $\{\ket{f_i}\}$ for $i\in[3]$ are the columns of the Fourier matrix 
	$$F_3 = \frac{1}{\sqrt 3}\begin{bmatrix}
	1 & 1 & 1 \\
	1 & \omega & \omega^2 \\
	1 & \omega^2 & \omega
	\end{bmatrix},$$
	with $\omega = \exp(2 \pi \mathrm{i}/3)$.
 We plot the \emph{compatibility region} of the POVMs defined above in Figure \ref{fig: compatibility region, uni dep and our noise}. Recall that the compatibility region \cite{bluhm2018joint} is a generalization of the incompatibility robustness from Eq.~\eqref{eq:incompatibility-robustness}:
 $$\Gamma(\mathcal A, \mathcal B) := \{(p,q) \in [0,1]^2 \, : \,  \mathcal A^p \text{ and } \mathcal B^{q} \text{ are compatible}\}.$$
 Notice in the formula above that the region $\Gamma(\mathcal A, \mathcal B)$ depends on the noise model used to define $\mathcal A^p$ and $\mathcal B^{q}$.  The compatibility regions corresponding to the different noise models are approximately the same; small differences can be observed for noise parameters in the interval $[0.7,1]$. One can notice that with our physical noise model, we require \emph{less} noise in order to make the POVMs $\mathcal A$ and $\mathcal B$ compatible with the other noise models considered in the literature.

\section{conclusion}

    In this work, we have proposed a new point of view for characterising the noise effect in measurement devices. With an indirect measurement process, by coupling the quantum state with a probe and assuming the dynamic is random, we have shown that a natural noise model emerges. The effect of noise in our point of view is a consequence of the lack of preparation of the probe. We have analysed in detail the case of the two-level probe system and a general $N+1$ level probe system, corresponding to an indirect measurement scheme with $N+1$ outcomes. Moreover, we provided an application for the compatibility of quantum measurements.  One should also mention that our work is not an attempt to explain the measurement problem in quantum theory. However, it sheds new light on the noise effects considered previously in the literature, constructing, for the first time, a noise model with a physical interpretation.

    In this work we have assumed that the dynamic is given by random unitary that will couple the probe and the state. This is motivated by the assumed unknown microscopic degrees of freedom. One can ask, if one can go beyond this simplified scenario to a more concrete one with a Hamiltonian description. The Hamiltonian of the probe coupled to the system which may be also random. One can ask what kind of noise model will emerge when one measures the probe and takes the average over the randomness which we shall investigate for future work.

\bigskip

\noindent\textit{Acknowledgements.} F.L.~would like to thank Denis Rochette for help with Mathematica code. The authors were supported by the ANR project \href{https://esquisses.math.cnrs.fr/}{ESQuisses}, grant number ANR-20-CE47-0014-01 as well as by the PHC program  \emph{Star} (Applications of random matrix theory and abstract harmonic analysis to quantum information theory). I.N.~has also received support from the ANR project \href{https://www.math.univ-toulouse.fr/~gcebron/STARS.php}{STARS}, grant number ANR-20-CE40-0008. C.P~is also supported by the ANR projects Q-COAST ANR- 19-CE48-0003, ``Quantum Trajectories'' ANR-20-CE40-0024-01, and ``Investissements d'Avenir'' ANR-11-LABX-0040 of the French National Research Agency.

\bibliographystyle{alpha}
\bibliography{ref}
	
\end{document}